\renewcommand\footnotetextcopyrightpermission[1]{} %
\newcommand{\algocomment}[1]{\textcolor{teal}{{\tcp{\textnormal{#1}}}}}
\newcommand{\algocommentinline}[1]{\tcp*[f]{\textcolor{teal}{\textnormal{#1}}}} %
\Crefname{algocf}{Algorithm}{Algorithms}
\def\e#1{\emph{#1}}
\newcommand{\hide}[1]{}
\newcommand{\var}{\texttt{var}}
\newcommand{\cnt}{\mathtt{cnt}}
\newcommand{\bigO}{\mathcal{O}}
\def\polylog{\operatorname{polylog}} %
\newcommand{\datarule}{{\,:\!\!-\,}}
\newcommand{\sem}{\mathbb{S}}
\newcommand{\Szero}{\bar{0}}
\newcommand{\Sone}{\bar{1}}
\newcommand{\agg}{\texttt{agg}}
\newcommand{\AGG}{\texttt{AGG}}
\newcommand{\val}{\texttt{val}}
\newcommand{\sub}{\texttt{sub}}
\newcommand{\N}{\mathbb{N}}
\newcommand{\dom}{\mathrm{dom}}
\newcommand{\ar}{\texttt{ar}}
\newcommand{\calH}{{\mathcal H}}
\newcommand{\pord}{\pi}
\newcommand{\pordset}{\Pi}
\newcommand{\neighbors}[1]{\text{Neighbors}_{#1}}
\newcommand{\parent}[1]{\text{Parent}_{#1}}
\newcommand{\branchRoots}{\texttt{branchRoots}}
\newcommand{\subtree}[2]{#1[#2]}
\newcommand{\restrict}[2]{#1|_{#2}}
\newcommand{\res}{\texttt{result}}
\newcommand{\resr}[1]{\res_{#1}}
\newcommand{\highest}[1]{\text{root}_{#1}}
\newcommand{\trunktree}{T_{\text{trunk}}}
\newcommand{\trunkpord}{\pord_{\text{trunk}}}
\newcommand{\trunkvars}{X_{\text{trunk}}}
\newcommand{\minval}{\texttt{minVal}}
\newcommand{\qid}{\texttt{Q}_\texttt{ID}}
\newcommand{\qeq}{\texttt{count}}
\newcommand{\qps}{\texttt{smaller}_\texttt{CQ}}
\newcommand{\tps}{\texttt{smaller}_\texttt{total}}
\newcommand{\ps}{\texttt{smaller}}
\newcommand{\hyperclique}{\textsc{Hyperclique}}
\newcommand{\seth}{\textsc{Seth}}
\newcommand{\BMM}{\textsc{BMM}}
\newcommand{\MINDA}{\textsc{MinDA\xspace}}
\newcommand{\introparagraph}[1]{\textbf{#1.}} %
\newtheorem{remark}{Remark}
\newcommand*\gcircled[1]{\tikz[baseline=(char.base)]{
            \node[shape=circle,draw,fill=gray!20,inner sep=1pt] (char) {#1};}}
\begin{document}

\author{Nofar Carmeli}
\email{Nofar.Carmeli@inria.fr}
\orcid{0000-0003-0673-5510}
\affiliation{%
  \institution{Inria, LIRMM, University of Montpellier, CNRS}
  \country{France}
}

\author{Nikolaos Tziavelis}
\email{ntziavel@ucsc.edu}
\orcid{0000-0001-8342-2177}
\affiliation{%
  \institution{UC Santa Cruz}
  \country{USA}
}

\begin{CCSXML}
\end{CCSXML}

\keywords{}

\title{Fine-Grained Dichotomies for Conjunctive Queries with Minimum or Maximum}

\begin{abstract}

We investigate the fine-grained complexity of direct access to Conjunctive Query (CQ) answers according to their position, ordered by the minimum (or maximum) value between attributes.
We further use the tools we develop to explore a wealth of related tasks. We consider the task of ranked enumeration under min/max orders, as well as tasks concerning CQs with predicates of the form $x \leq \min X$, where $X$ is a set of variables and $x$ is a single variable: counting, enumeration, direct access, and predicate elimination (i.e., transforming the pair of query and database to an equivalent pair without min-predicates).
For each task, we establish a complete dichotomy for self-join-free CQs, precisely identifying the cases that are solvable in near-ideal time, i.e., (quasi)linear preprocessing time followed by constant or logarithmic time per output.
\end{abstract}

\maketitle

\section{Introduction}

\introparagraph{Query-Answering Tasks}
Database systems strive to answer queries efficiently, but
the number of answers can sometimes be very large.
Instead of computing all answers at once, different \emph{query-answering tasks} 
aim to extract useful information about the answers in significantly less time.
\emph{Enumeration}~\cite{bdg:dichotomy,Berkholz20tutorial} is the task of producing answers one-by-one,
focusing on the time for each individual answer rather than the total time.
\emph{Counting}~\cite{pichler13counting} reports only the number of answers.
\emph{Ranked direct access}~\cite{carmeli23tractable} asks for a data structure that allows accessing any answer by its index, as if the answers were stored in a sorted array (but ideally without materializing it).
This task is powerful as it supports all previous tasks as special cases,
while also supporting other use-cases, such as histograms, quantiles, boxplots, or serving ranked answers in pages.
So, it should be no surprise that ranked direct access cannot always be supported efficiently, without first producing all query answers. 
Relaxations include \emph{unranked direct access}, which can be sufficient for sampling (with or without replacement)~\cite{nofar22random},
and \emph{selection} or \emph{quantile computation}~\cite{tziavelis23quantiles}, 
which is the single-access task.

We focus on \emph{Conjunctive Queries} (CQs), a class of queries that is often studied to reason about the complexity of query-answering tasks, as this class is relatively simple, but complex enough to capture the main difficulties of query evaluation, caused by joins.
A first step in understanding the fine-grained complexity is to characterize which problem instances (defined by a query and sometimes an order) can be supported in time close to the trivial lower bound of linear time.
This time bound is specified with respect to the database size, and not the number of query answers which may be much larger, or the query size that we treat as a constant.
Allowing for logarithmic factors, our work focuses on \e{quasilinear preprocessing} time and \emph{logarithmic time per output}; these are the time bounds that we refer to as efficient. 
This boundary is also practically important since polynomially-larger algorithms cannot always run on large-scale databases in reasonable time.

\introparagraph{Min-Ranked Direct Access}
For the task of ranked direct access, a characterization of efficient queries and orders is known for certain families of orders~\cite{carmeli23tractable}.
These are (full or partial) lexicographic orders, specified by an ordering of the free query variables, and orders by a sum of variables, specified by the subset of variables comprising the sum. 
A characterization is also known for the single-access task~\cite{tziavelis23quantiles}, which also extends to a third family of orders that sorts the answers by a \e{minimum (or maximum)} of a set of variables~\cite{tziavelis23quantiles}: any \emph{acyclic free-connex} CQ admits single access with any min/max order in quasilinear time. 
Since direct access has not been considered before for such orders, this leaves an obvious gap in the literature that motivates the first question we address: Which CQs and variable sets admit efficient min-ranked direct access?
Eldar et al.~\cite{eldar2024direct} have considered direct access with aggregates such as minimum,
but their problem is fundamentally different because
the minimum is on one variable between answers, while here the minimum is between variables within one answer.

\begin{example}
\label{ex:min_da}
A large software company wishes to inspect the reliability of teams that may form to handle urgent system failures. Each team comprises three professionals on call in the same shift: a Backend Engineer, a Cloud Specialist, and a Security Analyst. The team reliability is assessed to be the minimum reliability score of its members. If the database contains the relation $\text{Workers}(\text{ID}, \text{role}, \text{reliability}, \text{shift})$, we can use the query that joins $\text{Workers}(ID1, \text{`engineer'}, r1, s)$, $\text{Workers}(ID2, \text{`cloud'}, r2, s)$, and $\text{Workers}(ID3, \text{`security'}, r3, s)$ on the shift and sorts the answers by $\min(r1, r2, r3)$.
Ranked direct access allows gaining insights about the distribution of the reliability scores of possible teams.
This paper shows that it can indeed be supported efficiently.
\end{example}

In each query answer, the value of the minimum function can be attributed to a specific query variable, say $x$, and our algorithm uses the identity of $x$ to split the set of query answers.
To achieve this splitting, we use an extension of CQs that includes a minimum predicate of the form $x\leq \min X$, where $X$ is a set of variables. Beyond direct access, this class of queries can be interesting on its own. 
For the scenario of \Cref{ex:min_da}, adding $r1 \leq \min \{ r2, r3 \}$ to the CQ would give us the teams whose weakest link in terms of reliability is the Backend Engineer.

\introparagraph{Min-Predicate Elimination}
A common strategy for handling predicates that do not correspond to materialized relations (such as our min-predicate) is to transform the query into a different query without such predicates, possibly over a modified database.
This process, which we call \e{predicate elimination}, takes as input a database and returns a new database and a new query with effectively the same set of answers.
Efficient predicate elimination algorithms are known for 
non-equality predicates (sometimes called disequalities)~\cite{papadimitriou99complexity}, 
not-all-equal predicates~\cite{khamis19negation},
predicates of the type $\min{X}=c$ for a constant $c$~\cite{tziavelis23quantiles},
and some cases of inequality predicates~\cite{tziavelis21inequalities}.
Our min-predicate $x\leq \min X$ can be viewed as a
``star-shaped'' conjunction of inequalities $x\leq x'$ for each $x'\in X$,
but eliminating these inequalities with the known elimination algorithm~\cite{tziavelis21inequalities} requires a join tree in which every inequality is subsumed by an edge.\footnote{This condition was also used to evaluate aggregate queries with inequalities~\cite{khamis20faqai}.}
It is not known when such a join tree exists and whether this condition is necessary; an exact characterization of when inequalities can be efficiently eliminated is an open problem.
We resolve this problem specifically for the inequalities that correspond to a min-predicate.
The focus of our algorithm is to bring the query into a form that allows the inequalities to be eliminated by the known algorithm.

The queries obtained from predicate elimination are usually designed to belong to a query class known for its efficiency, such as \emph{full acyclic} CQs.
As we attempted to capture all the cases where min-ranked direct access is efficient, we noticed that we could not always find an elimination into a single CQ; instead, we sometimes have to construct a set of CQs, on top of which direct access can work if the CQs have disjoint sets of answers.
This transition from considering a single CQ to a set of disjoint CQs allows us to support more orders and complete the dichotomy.
In this sense, supporting minimum orders turned out to be more technically challenging than supporting lexicographic or sum orders, where each query can be supported using a single \emph{join tree}~\cite{carmeli23tractable}.

\introparagraph{Direct Access Dichotomy}
Using min-predicate elimination for upper bounds and counting-based arguments for lower bounds, we arrive at the structural condition required for efficient min-ranked direct access: 
The query must be acyclic free-connex (which is necessary even for unranked direct access or single access) and, in addition,
there must be no long chordless path between the variables participating in the order.
This condition forms a dichotomy for CQs with no self-joins, where the lower bound assumes the hardness of \emph{hyperclique detection} and the \emph{Strong Exponential Time Hypothesis}, both used before in similar contexts~\cite{mengel2025lower}.
With the dichotomy for min-rankings complete, we observe that, similarly to lexicographic and summation-based orders~\cite{carmeli23tractable,tziavelis23quantiles},
the efficient class of queries and orders for direct access is strictly smaller than the efficient class for single access.

\introparagraph{Elimination Dichotomy}
Predicate elimination is a broadly applicable tool that enables us to carry over algorithms developed for the more commonly studied class of CQs to the CQ with the predicate.
Consequently, we consider it as a problem on its own, and establish a dichotomy for the queries and min-predicates that can be eliminated efficiently.
As it turns out, the condition for efficiency is essentially the same as that of min-ranked direct access.

This urges us to further understand the generality of the elimination approach and the corresponding structural condition.
In order to capture all efficient  cases,
is it enough to consider algorithms that use min-predicate elimination, or are task-specific algorithms required? 
Are all cases with a long chordless path between variables in the minimum function inefficient?
By studying other tasks individually, we find that the answer depends on the task. 
The elimination condition forms a dichotomy for direct access and counting with a min-predicate, but not for enumeration.

\begin{figure}
    \centering
    \includegraphics[width=0.9\linewidth]{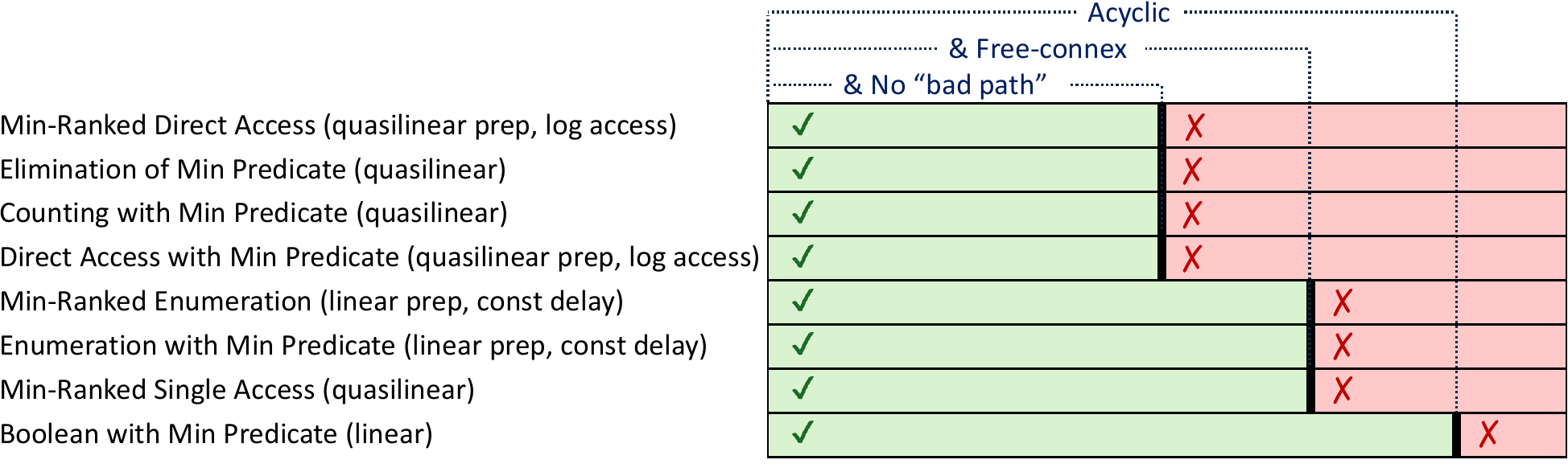}
    \caption{Overview of dichotomies for query-answering tasks. All results are established in this paper, except for min-ranked single access, which follows from results on quantile queries~\cite{tziavelis23quantiles}, as explained in \Cref{sec:quantiles_vs_selection}.  A ``bad path'' is a chordless $k$-path with $ k\geq 3$, which in the case of a problem with min-ranking is between two variables participating in the order, while in the case of a problem with a min-predicate $x \leq \min X$ is between $x$ if it is free and another free variable in $X$. The negative sides of all dichotomies apply only to self-join-free queries and are conditional on hardness hypotheses; see \Cref{sec:results} for detailed statements.}
    \label{fig:overview}
\end{figure}

\introparagraph{Enumeration Dichotomies}
We inspect two enumeration tasks: min-ranked enumeration for a CQ, and unranked enumeration for a CQ with a min-predicate. For both tasks, we find that not all efficiently-solvable cases fit the restrictive condition of predicate elimination. 
In fact, we give linear preprocessing and \emph{constant delay} algorithms that support any min-ranking or min-predicate for all CQs that support efficient enumeration without the min conditions (i.e., acyclic free-connex CQs~\cite{bdg:dichotomy}).
This is perhaps surprising, as this is not the case for other families of orders; as an example, 
while all lexicographic orders can be supported with logarithmic delay on tractable CQs~\cite{tziavelis25anyk},
only some orders are known to be supported with constant delay~\cite{Bakibayev13fdb,braultbaron13thesis}.
Viewing min-predicates as a ``star-shaped'' conjunction of inequalities, we can compare our enumeration algorithm for min-predicates with what is known for enumeration with inequalities. While there is no known dichotomy for a general conjunction, Wang and Yi~\cite{wang2022comparisons} propose an algorithm that applies when the inequalities satisfy a certain acyclicity condition. Our algorithm supports more cases of min-predicates (not satisfying their acyclicity condition), while the algorithm of Wang and Yi supports other cases that do not correspond to min-predicates, hence are not covered by our work.

\introparagraph{Organization}
\Cref{sec:prelims} introduces notation and background results. \Cref{sec:results} formalizes our main theorems. 
\Cref{sec:elimination} presents the predicate elimination algorithm, and \Cref{sec:da_min_alg} the min-ranked direct access algorithm, while \Cref{sec:enum} addresses enumeration.
Conditional lower bounds for predicate elimination, counting, and direct access are given in \Cref{sec:hardness}. We conclude in \Cref{sec:conclusion}.

\begin{toappendix}
\section{Min-Ranked Single Access}
\label{sec:quantiles_vs_selection}

The \emph{single access} problem for a query $Q$ and order $\preceq$, also known as \emph{selection}~\cite{carmeli23tractable}, is the following:
Given a database $D$ and an index $k$ as input, return the $k$-th answer in the list $Q(D)$ sorted by a total order $\preceq$. If $k$ is too large, an ``out-of-bounds'' message is returned.

\begin{theorem}[\cite{tziavelis23quantiles,carmeli23tractable}]
    Let $Q$ be a CQ with no self-joins and $X$ a subset of its free variables. 
    $Q$ admits single access according to $\min{X}$ with quasilinear preprocessing time and logarithmic access time if and only if $Q$ is acyclic free-connex, assuming \hyperclique{} and \seth{}.
\end{theorem}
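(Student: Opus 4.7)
For the upper bound, suppose $Q$ is acyclic free-connex. First I would apply the quantile algorithm of Tziavelis et al.~\cite{tziavelis23quantiles}, which after quasilinear preprocessing answers, in $O(\log n)$ time, the query ``what is the $k$-th value of $\min X$ across $Q(D)$?''. Let $v_k$ be this value and let $m$ be the number of answers whose $\min X$ is strictly smaller than $v_k$, which the same data structure can return. The $k$-th answer in the min order is then the $(k - m)$-th element (for a fixed auxiliary tie-breaking order) of the tie class $T_{v_k} = \{\vec{a} \in Q(D) : \min X(\vec{a}) = v_k\}$. Because $T_{v_k}$ is the answer set of $Q$ combined with the predicate $\min X = v_k$, which can be absorbed into the database by restricting every relation that mentions a variable in $X$ to tuples whose projection on that variable is at least $v_k$ together with a union over which variable realizes $v_k$, the resulting instance remains acyclic free-connex, and I can use a lexicographic direct-access routine~\cite{carmeli23tractable} on it to locate the desired tuple in $O(\log n)$ time.

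For the lower bound, suppose $Q$ is self-join-free but not acyclic free-connex, and, toward contradiction, that $Q$ admits single access under $\min X$ with quasilinear preprocessing and logarithmic access. Since $|Q(D)|$ is polynomially bounded in $n$, I would use exponential-then-binary search on $k$ --- doubling $k$ until the oracle returns ``out-of-bounds'' and then bisecting --- to recover $|Q(D)|$ with $O(\log^2 n)$ oracle calls, amounting to a quasilinear-time counting algorithm for $Q(D)$. However, the counting dichotomy for self-join-free CQs says that quasilinear counting forces $Q$ to be acyclic free-connex, assuming \hyperclique{} for cyclic queries and \seth{} for acyclic non-free-connex queries, contradicting the hypothesis on $Q$.

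The main obstacle I anticipate lies in the upper bound: extracting the concrete answer at position $k - m$ inside the tie class $T_{v_k}$ requires indexing a set whose size can match $|Q(D)|$ and so cannot be materialized. Reducing this step to lexicographic direct access on a domain-restricted instance is natural, but one has to verify that the restriction preserves acyclicity and free-connexity, handle the union over which variable of $X$ attains $v_k$ without blowing up the preprocessing, and arrange for the data structures of the quantile step and of the tie-class indexing step to be built jointly within the quasilinear budget. The lower-bound direction is, in contrast, a routine reduction once the exponential-then-binary-search gadget is in place.
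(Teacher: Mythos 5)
Your lower bound matches the paper's: both recover $|Q(D)|$ by searching for the first out-of-bounds index and then invoke the counting lower bound for self-join-free CQs that are not acyclic free-connex. That direction is fine.

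The upper bound is where you diverge, and where the gap lies. You treat the quantile oracle as returning only the \emph{value} $v_k$ of $\min X$ at rank $k$, and then try to recover the actual tuple by indexing into the tie class $T_{v_k}$. But the quantile algorithm of Tziavelis et al.\ returns the \emph{answer} at position $\lceil \phi\,|Q(D)|\rceil$, i.e., the full tuple, not just its rank value. The paper's proof is therefore a one-liner: since $Q$ is acyclic free-connex, $|Q(D)|$ is computable in linear time, so single access at index $k$ is exactly the quantile query with $\phi = k/|Q(D)|$; one restricts to the free variables (\cref{lem:cq-free-restrict}) and invokes the full-acyclic min-ranked quantile theorem. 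Your entire tie-class machinery is unnecessary, and as written it is also not closed: you flag yourself that indexing $T_{v_k}$ within the time budget is an ``anticipated obstacle'' rather than resolving it. Two concrete problems with that step: (i) the structure for $T_{v_k}$ depends on $v_k$, hence on $k$, so it cannot be built in preprocessing, and precomputing it for all domain values is superlinear because the constraints $x_j \geq v$ for the non-minimal variables overlap heavily across values $v$; (ii) you invoke \emph{lexicographic} direct access from Carmeli et al., but that problem has its own dichotomy (disruptive trios) and is not available for every acyclic free-connex CQ and every tie-breaking order --- what you actually need is \emph{unranked} direct access, which does hold for all acyclic free-connex CQs. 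A useful sanity check: if your tie-class construction did yield a repeated-access structure with quasilinear preprocessing and logarithmic access for every acyclic free-connex $Q$, it would give min-ranked direct access in general and contradict the paper's Theorem on ranked direct access, which additionally requires that there be no chordless $k$-path ($k\geq 3$) between two variables of $X$. Your approach can only be salvaged for a \emph{single} access (doing the $v_k$-dependent work after seeing $k$, in quasilinear total time), which is exactly why the paper routes through the quantile problem instead.
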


\begin{proof}
The negative side holds regardless of the order~\cite[Lemma 6.4]{carmeli23tractable}. Intractability follows from the hardness of counting by binary searching for the smallest index that gives an out-of-bound message.
The positive side was shown for the closely related problem of quantile queries.
A quantile query takes as input a $\phi \in [0,1]$ and returns the $\phi$-quantile answer, i.e., the answer at position $\lceil \phi |Q(D)| \rceil$. When the query is free-connex acyclic, counting can be done in linear time, and the two problems coincide. Specifically, to perform single access with the index $k$, it is enough to ask for the quantile $\frac{k}{|Q(D)|}$.
We can restrict the problem to its free variables using \Cref{lem:cq-free-restrict} and then use the algorithm of Tziavelis et al.~\cite[Theorem 5.3]{tziavelis23quantiles} for full acyclic quantile queries with min rankings in quasilinear time. 
\end{proof}

We note that for quantile queries, it is unclear how to transfer the negative result concerning acyclic, but not free-connex CQs.
This is because the lower bound for single access relies on the hardness of counting the number of query answers~\cite{carmeli23tractable};
while single access allows us to compute this count via binary search (i.e., identifying the end of the array), this does not apply to quantile queries.
\end{toappendix}

\section{Preliminaries}\label{sec:prelims}

\subsection{Databases and Queries}

We use
$[m]$ to denote the set of integers $\{1, \ldots, m\}$.

\introparagraph{Databases}
A \emph{schema} is a set of relational symbols $\{ R_1, \ldots, R_m \}$.
A \emph{database} $D$ 
consists of a finite relation $R^D \subseteq \dom(D)^{\ar(R)}$ for each symbol $R$ in the schema,
where $\ar(R)$ is the arity of $R$, and the \emph{domain} $\dom(D)$ is a set of constant values that we assume to be $\N$.
When $D$ is unambiguous, we simply use $R$ instead of $R^D$.
We refer to the elements of $R$ as database \emph{tuples} or \emph{facts}.

\introparagraph{Queries}
A \emph{Conjunctive Query} (CQ) $Q$ is an expression of the form
$Q(\vec{y}) \datarule R_1(\vec{x_1}), \ldots, R_\ell(\vec{x_\ell})$,
where $\{ R_1, \ldots, R_\ell \}$ are relational symbols from a schema, and
$\vec{x_i}, i \in [\ell]$ is a list of $\ar(R_i)$ variables.
The body of the CQ is the logical expression on the right-hand side of the $\datarule$ operator.
Each $R_i(\vec{x_i})$ of the body is called an \e{atom} of $Q$, and we use $\var(a)$ to denote the variables in atom $a$.
The variables $\vec{y}$ are called \emph{free}, and each one must necessarily appear in some $\vec{x_i}, i \in [\ell]$.
The variables that are not free are called \emph{existential}.
A \emph{CQ with predicates} $Q \wedge P$ is a CQ $Q$ with the predicate $P$ added as a conjunction to its body.
The predicate $P$ can use variables of $Q$ and constants, but it does not correspond to any relation in the schema (thus it is not considered materialized). We also assume it can contain arbitrary conjunctions or disjunctions of atomic predicates.
In this work, we focus on min-predicates $P$ of the type $x \leq \min X$, where $x$ is a single variable and $X$ is a set of variables, all from $Q$. 
We note that it is equivalent to $x = \min X$ if $x \in X$.

A \e{homomorphism} from a CQ $Q$ to a database $D$ 
is a mapping of all $Q$ variables to constants from $\dom(D)$,
such that every atom of $Q$ maps to a tuple in $D$.
A \e{query answer} is such a homomorphism 
followed by a projection 
on
the free variables.
The set of query answers is $Q(D)$ and we use $q \in Q(D)$ for a query answer.
These definitions extend in a straightforward way to a CQ with a predicate $P$, in which case the homomorphism must also satisfy $P$.
We denote by $\cnt(Q(D) | P)$
the number of query answers to $Q$ over $D$ that satisfy $P$.

\introparagraph{Hypergraphs}
Two vertices in a hypergraph $\calH(V, E)$ are {\em neighbors} if they appear in the same edge.
A {\em path} is a sequence of vertices such that every two succeeding vertices are neighbors.
A path of length $k$, also referred to as a $k$-path, consists of $k + 1$ vertices.
A {\em chordless path} is a path in which no two non-succeeding vertices appear in the same hyperedge 
(in particular, no vertex appears twice).
Any path $v_1,v_2,\ldots,v_k$ can be contracted into a chordless path between $v_1$ and $v_k$, of possibly smaller length. Starting with $i=1$, we can connect $v_i$ to the largest $v_j$ among its neighbors, remove all variables in-between, and continue with $i=j$.
A \emph{join tree} of a hypergraph is a tree $T$ where the nodes correspond to hyperedges
and for all $u \in V$ the set $\{e \in E \mid u \in e\}$ forms a (connected) subtree in $T$.
A \emph{rooted join tree} is a join tree with a specific node delineated as the root of the tree. 
A hypergraph is (alpha)-acyclic if it admits a join tree.
We also allow for a more relaxed join tree notion, with additional nodes that correspond to restrictions of existing hyperedges, i.e., containing a subset of the vertices.
In particular, we can always introduce a join tree node that corresponds to a single vertex without affecting acyclicity.

\introparagraph{Classes of CQs}
A CQ is \e{full} if it has no existential variables and \e{Boolean} if all variables are existential.
A repeated occurrence of a relational symbol is called a \e{self-join} and if no self-joins exist, a CQ is called \e{self-join-free}.
To determine the acyclicity of a CQ $Q$, we associate a hypergraph $\calH(Q) = (V, E)$ with it,
where the vertices are the variables of $Q$, 
and every atom of $Q$ corresponds to a hyperedge with the same set of variables.
With a slight abuse of notation, we may sometimes identify atoms of $Q$ with hyperedges of $\calH(Q)$.
A CQ $Q$ is {\em acyclic} if $\calH(Q)$ is acyclic,
otherwise it is \e{cyclic}.
Further, it is called free-connex acyclic if it is acyclic and it remains acyclic after the addition of a hyperedge containing exactly the free variables~\cite{braultbaron13thesis,Berkholz20tutorial}.

\introparagraph{Orders over Query Answers}
The query answers $Q(D)$ can be ranked according to a given \e{total order} $\preceq$.
We mainly focus on min-ranking and max-ranking, where each answer is compared according to the value $\min X$ or $\max X$ for a given set $X$ of query variables.
We note that this ranking can be viewed as a totally ordered commutative monoid (where the monoid operator is $\min$ or $\max$) or as a min-max semiring~\cite{tziavelis25anyk}.
We will also make use of (partial) lexicographic orders of one variable $\langle x \rangle$, which simply means that the answers are ordered according to the $x$ value.

\subsection{Query-Answering Tasks}

We define several computational problems for a query $Q$ that can be either a simple CQ or a CQ with predicates. In all cases, the database $D$ is considered the input, while the query $Q$ and any order $\preceq$ over query answers are considered part of the problem definition.

\gcircled{1} \e{Boolean}: Report whether $Q(D)$ is empty.
\gcircled{2} \e{Counting}: Report $|Q(D)|$.
\gcircled{3} \e{(Unranked) Enumeration}: After a preprocessing phase, report $Q(D)$ one-at-a-time without duplicates in arbitrary order.
\gcircled{4} \e{Ranked Enumeration}: After a preprocessing phase, report $Q(D)$ one-at-a-time without duplicates in $\preceq$ order.
\gcircled{5} \e{(Unranked) Direct Access}: Build a data structure that supports accessing any index of a list containing $Q(D)$ in arbitrary order.
\gcircled{6} \e{Ranked Direct Access}: Build a data structure that supports accessing any index of the list $Q(D)$ sorted by $\preceq$ order.

Since $Q$ is considered fixed, we are interested in the data complexity~\cite{DBLP:conf/stoc/Vardi82} of these tasks, i.e., the complexity with respect to $|D|$.
We use the RAM model of computation with uniform cost measure. It allows for linear time construction of lookup tables which can be accessed in 
constant time.
It also allows sorting input relations in linear time~\cite{grandjean1996sorting}, a fact that we use to get linear preprocessing in our enumeration results.

\subsection{Existential Variable Elimination}

Given a CQ, we define its restriction to free variables to be the full CQ obtained by replacing every atom $R(\vec{x})$ with an atom $R'(\vec{x'})$, where $\vec{x'}$ is $\vec{x}$ restricted to the free variables, and $R'$ is a fresh relational symbol that appears only once in the query. 
It is known that acyclic free-connex CQs can be efficiently restricted to their free variables~\cite{Berkholz20tutorial}.
This can be done using the Yannakakis semijoin reduction~\cite{Yannakakis} on a suitable join tree.
The resulting query is full, acyclic, self-join-free, and all chordless paths of free variables in the original query are preserved.

\begin{lemma}\label{lem:cq-free-restrict}[Existential Elimination]
    Let $Q$ be an acyclic free-connex CQ and $Q'$ its restriction to free variables. Given a database $D$, we can build in linear time a database $D'$ such that $Q(D)=Q'(D')$.
\end{lemma}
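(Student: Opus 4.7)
The plan is to build $D'$ in two stages using a suitable free-connex join tree of $Q$. Because $Q$ is free-connex acyclic, I would first fix a join tree $T$ of $\calH(Q)$ together with a connected subtree $T_f$ of $T$ whose vertex set is exactly the free variables $\vec{y}$; the remaining atoms hang off $T_f$ as pendant subtrees that contain all existential variables. Next, I would run a full Yannakakis semijoin reduction on $D$ along $T$, processing edges bottom-up and then top-down, so that after the pass every tuple in every relation participates in at least one full homomorphism of $Q$ into $D$. Finally, I would take $D'$ to contain, for each atom $R(\vec{x})$ of $Q$, the (deduplicated) projection of the reduced $R$ onto $\vec{x} \cap \vec{y}$, stored under the fresh symbol $R'$ that appears only once in $Q'$.

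The running-time bound is straightforward: Yannakakis reduction for an acyclic query of constant size runs in $\bigO(|D|)$, and projection and deduplication on tuples of constant arity over $\N$ also run in $\bigO(|D|)$ via bucket sort, as explicitly allowed by the RAM model described in the preliminaries.

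The main obstacle is correctness. The inclusion $Q(D) \subseteq Q'(D')$ is immediate, since projecting any witnessing homomorphism atom-by-atom yields a homomorphism into $D'$. The hard direction, $Q'(D') \subseteq Q(D)$, requires lifting a homomorphism $h$ witnessing $q \in Q'(D')$ to a full homomorphism of $Q$ into $D$. Here I would exploit $T_f$ and $T$ jointly: $h$ already fixes all free variables, and for each atom $a$ in $T_f$, by construction of $R_a'$ and the preceding reduction, there is a tuple of the reduced $R_a$ whose free-variable projection agrees with $h$. I would then walk $T$ top-down from $T_f$ into each pendant existential subtree; the chosen tuple at the root of such a subtree can be extended to a full match inside the subtree thanks to Yannakakis reduction, and by the running intersection property of join trees, any variable shared between two distinct pendant subtrees must already lie in $T_f$ and hence be pinned by $h$. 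Consequently, extensions chosen independently in different subtrees cannot conflict, and we obtain the required full homomorphism of $Q$ into $D$.
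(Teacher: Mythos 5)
Your proof is correct and follows essentially the same route the paper takes for this lemma, which is stated as a known result and attributed to the standard construction: a semijoin reduction \`a la Yannakakis over a free-connex join tree, followed by projecting each reduced relation onto its free variables. The only point you leave implicit is that every free variable occurring in a pendant existential subtree must, by the running intersection property, already occur in that subtree's root node, which is exactly why the downward extension of the root tuple automatically agrees with $h$ on all free variables and not just on those of the root.
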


\subsection{Aggregates for Full Acyclic CQs}

Given a subset of the relations referenced in a full CQ, a partial answer is a set of tuples, one from each relation in the subset, that join with each other, in the sense that they agree on the common variables.
Fixing a rooted join tree $T$ for a full acyclic CQ and materializing a fresh relation for each node in the join tree (which eliminates self-joins), 
let $R$ be one of these relations and $t \in R$ one of its tuples.
We define the set of partial answers in the subtree $\sub_T(t)$ to be all the partial answers that include $t$ from $R$, as well as tuples from relations in the subtree of $R$.

Let $\AGG$ be the following computational problem for a CQ $Q$ that is self-join-free and full acyclic:
We are given as input a database $D$, a rooted join tree $T$ of $Q$, a function $\val(t)$ that associates a value to each tuple $t$ in $D$, and two 
binary, associative, and commutative operators $\oplus, \otimes$.
The output is an aggregate value $\agg(t) = \bigoplus_{p \in \sub_T(t)} \bigotimes_{t' 
\in p} \val(t')$ for each tuple $t$ in $D$.

It is known that the problem can be solved efficiently when the operators form a commutative semiring~\cite{GondranMinoux:2008:Semirings}.
A commutative \emph{semiring} $\sem$ is a 5-tuple $(A, \oplus, \otimes, \Szero, \Sone)$, where $A$ is a non-empty set, $\oplus$ and $\otimes$ are binary, associative, and commutative operators over $A$, $\Szero$ is a neutral element for $\oplus$ (i.e., $x \oplus \Szero = x$, for all $x \in A$), $\Sone$ is a neutral element for $\otimes$ (i.e., $x \otimes \Sone = \Sone \otimes x = x$, for all $x \in A$), the operator $\otimes$ distributes over $\oplus$, and $\Szero$ is absorbing for $\otimes$ (i.e., $x \otimes \Szero = \Szero \otimes x = \Szero$, for all $x \in A$).
The algorithm works in a bottom-up fashion on the join tree and is a generalization of the Yannakakis algorithm~\cite{Yannakakis}.
We give more details in ~\Cref{sec:semiring-agg}.

\begin{lemma}~\cite{abokhamis16faq,Joglekar16ajar}
\label{lem:semiring-agg}
    The problem $\AGG$ is in $\bigO(|D|)$ for full acyclic CQs and semiring aggregates, i.e.,
    if $(A, \oplus, \otimes, \Szero, \Sone)$ is a commutative semiring and $\val$ assigns values from domain $A$.
\end{lemma}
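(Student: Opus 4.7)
The plan is to prove the result by a bottom-up generalization of the Yannakakis algorithm over the rooted join tree $T$. The core of the argument is a recursive decomposition of $\agg(t)$ that exploits the distributive law in the semiring. For a tuple $t$ in a leaf-node relation $R$, the subtree contains only $R$ itself so $\agg(t) = \val(t)$. For a tuple $t$ in an internal-node relation $R$ with children relations $R_1,\ldots,R_k$ in $T$, I would establish the identity
\begin{equation*}
\agg(t) \;=\; \val(t) \;\otimes\; \bigotimes_{i=1}^{k} \Bigl( \bigoplus_{\substack{t' \in R_i \\ t' \text{ agrees with } t}} \agg(t') \Bigr),
\end{equation*}
where ``agrees with'' means on the common variables $\var(R_i) \cap \var(R)$. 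This is exactly the shape needed for efficient evaluation: once we know $\agg(t')$ for each $t'$ in every child relation, computing $\agg(t)$ reduces to one multiplication per child after an aggregated lookup.

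Next I would describe the implementation. Processing $T$ in postorder, for each edge from child $R_i$ to parent $R$ I would group $R_i$ by its projection onto $\var(R_i)\cap \var(R)$ and store, for each projection key $\pi$, the value $\bigoplus_{t' : t'|_{\var(R)}=\pi} \agg(t')$. Then for each $t \in R$, the child's contribution is obtained in $O(1)$ by a hash lookup using $t|_{\var(R_i)\cap\var(R)}$ as key; the product over all children uses the $\otimes$ operator and a final $\otimes$ with $\val(t)$ yields $\agg(t)$. Since the query (and hence the join tree) is fixed and each tuple is involved in a constant number of such scans and lookups, the total time is $O(|D|)$ in the uniform-cost RAM model, using either hash tables or the linear-time sorting of~\cite{grandjean1996sorting} for the grouping step.

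The main obstacle is justifying the decomposition identity, because a priori the sum over $\sub_T(t)$ mixes choices of tuples from different child subtrees. The key structural fact is the running intersection property of the join tree: any variable appearing in two different child subtrees of $R$ must also appear in $R$ itself. Consequently, once $t$ is fixed, the compatibility constraints between tuples in different child subtrees are entirely absorbed by $t$, and the set of partial answers in $\sub_T(t)$ factors as a Cartesian product over the children of the sets of ``extensions'' living in each subtree. Expanding $\bigoplus_{p\in\sub_T(t)} \bigotimes_{t'\in p} \val(t')$ on this factorization and using the distributivity of $\otimes$ over $\oplus$, together with $\Sone$ being the neutral element for empty products, yields the claimed identity. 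Absorption by $\Szero$ is only needed to handle child groupings with no matching tuple, which contribute $\Szero$ and annihilate $\agg(t)$ as expected. With correctness in place, the linear running time follows immediately from the grouping implementation above.
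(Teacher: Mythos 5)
Your proposal is correct and follows essentially the same route as the paper's proof: a bottom-up (post-order) pass over the rooted join tree that groups each child relation by its shared variables with the parent, aggregates with $\oplus$ within each group, and combines the per-child results and $\val(t)$ with $\otimes$. The only difference is one of exposition — you spell out the Cartesian-product factorization of $\sub_T(t)$ via the join-tree connectedness property and the role of $\Szero$/$\Sone$, which the paper compresses into the remark that correctness follows from distributivity.
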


An example use of
\Cref{lem:semiring-agg} is for counting.
We set $\val(t)=1$ for all tuples $t$,
$\oplus$ to sum $(+)$,
and $\otimes$ to product $(\times)$.
The aggregate value of a tuple in the root relation corresponds to the number of CQ answers that agree with the tuple.
To get the final count, we sum these root-relation counts.

\subsection{Hypotheses}

Our lower bounds are conditioned on hardness hypotheses for other problems:
\gcircled{1} \BMM{}~\cite{Berkholz20tutorial}: Two Boolean matrices $A$ and $B$, represented as lists of their non-zero entries, cannot be multiplied in time $m^{1+o(1)}$, where $m$ is the number of non-zeros in $A$, $B$, and $AB$.
\gcircled{2} \hyperclique~\cite{abboud14conjectures,DBLP:conf/soda/LincolnWW18}:
A $(k{+}1,k)$-hyperclique is a set of $k{+}1$ vertices
such that every subset of $k$ vertices is a hyperedge.
For every $k \geq 2$, there is no
$O(m \polylog m)$ algorithm for deciding the existence of a
$(k{+}1,k)$-hyperclique in a hypergraph with $m$ hyperedges.
\gcircled{3} \seth~\cite{impagliazzo01seth}:
    For the satisfiability problem with $m$ variables and 
    $k$ variables per clause ($k$-SAT),
    if $s_k$ is the infimum of the real numbers $\delta$
    for which $k$-SAT admits an
    $\bigO(2^{\delta m})$ algorithm,
    then $\lim_{k \to \infty} s_k = 1$.

\begin{toappendix}
\section{Algorithm for Bottom-up Aggregate Computation}
\label{sec:semiring-agg}

We give more details about the linear time algorithm of \Cref{lem:semiring-agg}.
Recall that in the $\AGG$ problem, the goal is to compute an aggregate value $\agg(t) = \bigoplus_{q_T(t) \in \sub_T(t)} \bigotimes_{t' 
\in q_T(t)} \val(t')$ for each tuple $t$ in a given database $D$, 
where $\val(t)$ is a given assigned value.

\introparagraph{Preprocessing}
Recall that a rooted join tree $T$ has been given as input.
We materialize a distinct relation for every $T$-node.
For every parent node $V_p$ and child node $V_c$, we group the $V_c$-relation
by the $V_p \cap V_c$ variables.
We refer to these groups of tuples as \emph{join buckets}; a join bucket shares the same 
values for variables that appear in the parent node.

\introparagraph{Bottom-up Messages}
The algorithm visits the relations in a bottom-up order of $T$,
sending children-to-parent messages.
As we traverse the relations in bottom-up order, every tuple $t$ computes its $\agg(t)$ and emits it as a message upwards.
The messages are aggregated as follows:

\begin{enumerate}
    \item Messages emitted by tuples $t'$ in a join bucket are aggregated with the operator $\oplus$.
    The result is sent to all parent-relation tuples that agree with the join values of the bucket. 
    
    \item A tuple $t$ computes $\agg(t)$ by aggregating the messages received from all children in the join tree, together with the initial value of $\val(t)$, with the operator $\otimes$.    
\end{enumerate}

The algorithm takes linear time because (1) every tuple is visited once and (2) the aggregation within a join bucket is linear, and (3) the join bucket partition the tuples of each relation.
Correctness is guaranteed because of the distributivity property of the semiring.
\end{toappendix}

\section{Main Results}\label{sec:results}

In this section, we state our main results.
In all cases, the restriction to self-join-free queries and the fine-grained complexity hypotheses are required only for the negative results. In addition, all results also hold for the symmetric case of max.

We start by identifying the cases in which we can efficiently transform a CQ with a min-predicate into an equivalent set of disjoint CQs.
In particular, we would like the CQs in the set to be full and acyclic, as such CQs are known to admit efficient algorithms for a variety of tasks.

\begin{definition}[Predicate Elimination]
\label{def:pred-elimination}
Let $Q$ be a CQ and $P(X)$ a predicate with variables $X \subseteq \var(Q)$.
An \emph{elimination} of $P(X)$ from $Q$ is the computational problem that takes a database $D$ as input, and returns a set of $\ell$ pairs of CQs and databases $\{(Q_i, D_i) \,|\, i \in [\ell]\}$ such that:
\begin{itemize}
    \item The number of queries $\ell$ and the query sizes $|Q_i|$ do not depend on $D$, for all $i \in [\ell]$,
    \item $\var(Q)\subseteq\var(Q_i)$, for all $i \in [\ell]$,
    \item projecting to $\var(Q)$ is a bijection from $\cup_{i \in [\ell]}{Q_i(D_i)}$ to $(Q \wedge P)(D)$, and
    \item for each answer to $(Q \wedge P)(D)$, there is a unique $i \in [\ell]$ such that the answer appears in $Q_i(D_i)$ projected to $\var(Q)$.
\end{itemize}
We say that the elimination is full and acyclic if all CQs $Q_i$ are full and acyclic.\footnote{The term ``predicate trimming'' was used before~\cite{tziavelis23quantiles} to refer to a similar problem, but there the resulting query is a CQ and the bijection between the answer sets is not required to be projection.}
\end{definition}

\begin{theoremrep}[Predicate Elimination]\label{thm:remove-min}
    Let $Q$ be a self-join-free CQ, $X$ a subset of its variables, and $P \equiv (x \leq \min X)$ a predicate.  
    A full acyclic elimination of $P$ from $Q$ is possible in quasilinear time if and only if $Q$ is acyclic free-connex and it is not the case that $x$ is free and there is a chordless $k$-path between $x$ and another free variable in $X$ with $k\geq 3$, assuming \hyperclique{} and \seth{}.
\end{theoremrep}
\begin{appendixproof}

    For the positive side, the first step is to use \Cref{lem:restrict-predicate-to-free} to restrict the query to its free variables.
    So, we can now assume that $Q$ is a full acyclic CQ.

    Next, we transform the domains of variables so that there
    is never an equality between the assignments to different variables.
    This allows us to work with strict inequalities and strict total orders. 
    First, we make sure the query is self-join-free by duplicating relations if needed. Now we can transform the domain of each variable separately to ensure they have disjoint domains.
    For a variable $v$, we replace the domain element $c$ with the tuple $(c,v)$. This construction preserves the relative order between different assignments to variables. That is, if $x_1$ is assigned $c_1$ and $x_2$ is assigned $c_2$ with $c_1<c_2$, over this transformation we also get that $(c_1,x_1)<(c_2,x_2)$ regardless of the order between $x_1$ and $x_2$.
    We have to be careful about how we break ties.
    Since we handle a predicate of the form $x \leq \min X$, we set $x_0$ to have the smallest identifier. This way, we have that $(c,x_0)<(c,x')$  for all $x'\in X$, and a solution that assigns $c$ to both variables will satisfy the predicate $x_0<x'$. Thus, to enforce the predicate $x \leq \min X$, we can simply consider all strict orders that enforce all inequalities of the form $x_0<x'$ with $x'\in X$.\footnote{\label{footnote:other-predicate}If we want to handle a predicate of the form $x_0<\min X$ with $x_0\notin X$ instead, we can set $x_0$ to have the largest identifier. This way, we have that $(c,x_0)>(c,x')$ for all $x'\in X$ and a solution that assigns $a$ to both variables will not satisfy the predicate $x_0<x'$.}
    
    Now we can use \Cref{lem:partitioning} with $Q'$, $x_0$ and $X'$ to find a collection of strict partial orders $\pord_1,\ldots,\pord_k$ that form a partition of $P$ for the case of disjoint domains. Due to \Cref{cor:enforce_order}, each of these orders independently can be eliminated.
    Let $Q_i$ and $D_i$ be the result of eliminating $P_{\pord_i}$ from $Q'$ given $D'$.
    Rename the relations and the fresh variables in the eliminations such that for $i\neq j$ we have that $Q_i$ and $Q_j$ use disjoint relation names and $\var(Q_i)\setminus\var(Q)$ and $\var(Q_j)\setminus\var(Q)$ are disjoint. Set $Q''$ to be the set containing $Q_i$ for $i\in[k]$, and set $D''$ to be the union of $D_i$ for $i\in[k]$.
    Since databases $D_i$, $D_j$ with $i\neq j$ use different relations, we have that $Q_i(D_i)=Q_i(D'')$. Since the strict partial orders are disjoint, we have that the queries in the set give disjoint answer sets. Thus $Q''$ and $D''$ form the required elimination.

    For the negative side,
    if a full acyclic elimination existed, we would be able to count for it in linear time (by summing the count for the individual CQs in the set), contradicting \Cref{lem:counting-hard-side} that claims it is not possible in the case described here (assuming \hyperclique{} and \seth{}).
\end{appendixproof}

\begin{remark}[Elimination Composition]\label{remark:composition}
Predicate elimination is not composable in general; it can be possible to eliminate two predicates independently, but not their conjunction. Eliminating one min-predicate can introduce a long chordless path, preventing the elimination of another. For example, consider
$Q(x_1, x_2, x_3, y) \datarule R_1(x_1), R_2(x_2, y), R_3(x_3, y)$ with predicates $x_1 \leq \min(x_1, x_2)$ and $x_1 \leq \min(x_1, x_3)$.
By \Cref{thm:remove-min}, either predicate can be removed. However, eliminating the first introduces a fresh variable into the first two atoms, connecting $x_1$ and $x_3$ with a chordless path of length 3, thereby preventing the elimination of the second. We will revisit a similar query in \Cref{ex:partitioning-order}.
\end{remark}

\begin{remark}[Choice of Predicate]\label{rem:predicate-choice}
    The semantics of $x \leq \min X$ is that $x$ is less than or equal to all $X$ variables, regardless of whether $x \in X$.
    Our results extend easily to the predicate that requires $x$ to be strictly the largest, written as $\min X > x$ with $x \notin X$.
\end{remark}

To prove the negative side of the elimination dichotomy, we consider the problem of counting for CQs with min-predicates.

\begin{theoremrep}[Counting With a Predicate]\label{thm:count-min-dichotomy}
    Let $Q$ be a self-join-free CQ, $X$ a subset of its variables, and $P \equiv (x \leq \min X)$ a predicate.  
    Counting the answers of $Q\wedge P$ is possible in quasilinear time if and only if $Q$ is acyclic free-connex and it is not the case that $x$ is free and there is a chordless $k$-path between $x$ and another free variable in $X$ with $k\geq 3$, assuming \hyperclique{} and \seth{}. 
\end{theoremrep}
\begin{appendixproof}
    The positive side is a consequence of our predicate elimination result: we can eliminate the predicate according to \Cref{thm:remove-min}, then apply the known counting algorithm that takes linear time on each CQ in the obtained set, and we can simply sum up the results because the CQs in the set have disjoint answer sets.
    The negative side is given in \Cref{lem:counting-hard-side}.
\end{appendixproof}

We prove the positive side of \Cref{thm:remove-min} in \Cref{sec:elimination}, and the negative side of \Cref{thm:count-min-dichotomy} in \Cref{sec:count-neg}.
As counting can be done efficiently for full acyclic CQs, by summing up the counts for each CQ in the set, efficient predicate elimination implies efficient counting. Thus, the positive side of \Cref{thm:remove-min} implies the positive side of \Cref{thm:count-min-dichotomy}, and the negative side of \Cref{thm:count-min-dichotomy} implies the negative side of \Cref{thm:remove-min}.
A noteworthy private case of \Cref{thm:count-min-dichotomy} is the Boolean task.

\begin{corollary}[Boolean With a Predicate]
    Let $Q$ be a self-join-free CQ, $X$ a subset of its variables, and $P \equiv (x \leq \min X)$ a predicate.  
    The Boolean task for $Q\wedge P$ is possible in quasilinear time if and only if $Q$ is acyclic, assuming \hyperclique{} and \seth{}. 
\end{corollary}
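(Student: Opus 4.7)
I plan to derive both directions from the earlier results in the paper together with the standard hardness of Boolean CQ evaluation. For the positive direction, suppose $Q$ is acyclic. Since the Boolean task is insensitive to projection onto free variables, I would first replace $Q$ by its Boolean variant $\tilde{Q}$ in which every variable is declared existential. $\tilde{Q}$ is self-join-free and acyclic, and trivially free-connex as it has no free variables; the ``bad path'' exception of \Cref{thm:count-min-dichotomy} requires $x$ to be free and is therefore vacuous for $\tilde{Q}$. I can then invoke the positive side of \Cref{thm:count-min-dichotomy} to compute $\cnt(\tilde{Q}(D) | P)$ in quasilinear time, and return ``yes'' iff this count is positive.

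For the negative direction, suppose $Q$ is not acyclic. I would reduce from Boolean evaluation of $Q$ itself, which for self-join-free cyclic CQs is known to require super-quasilinear time under \hyperclique~\cite{braultbaron13thesis}. Given an input database $D$, pick a constant $M$ strictly larger than every value in $\dom(D)$ and build $D'$ as follows: for each atom $R(y_1,\dots,y_k)$ of $Q$ and each tuple $(v_1,\dots,v_k) \in R^D$, insert $(v_1',\dots,v_k')$ into $R^{D'}$, where $v_i' = v_i + M$ if $y_i \in X \setminus \{x\}$ and $v_i' = v_i$ otherwise. This runs in linear time and is unambiguous because $Q$ is self-join-free, so each relation is associated to a single atom and the ``$X \setminus \{x\}$-positions'' are well-defined.

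Correctness reduces to showing that inverting the shift on the $X \setminus \{x\}$ positions gives a bijection between homomorphisms $Q \to D$ and $Q \to D'$, and that in $D'$ every such homomorphism automatically satisfies $P$. The latter holds because after the shift any value at an $X \setminus \{x\}$ position is at least $M$, while the value assigned to $x$ is at most $M-1$; when $x \in X$, this only adds the trivially true comparison $h'(x) \leq h'(x)$. Hence $Q(D) \neq \emptyset$ iff $(Q \wedge P)(D') \neq \emptyset$, and the lower bound transfers. The subtlest point --- and the reason I shift $X \setminus \{x\}$ rather than all of $X$ --- is keeping the $x$-position untouched so the reduction uniformly handles both $x \in X$ and $x \notin X$; beyond this, both directions are essentially immediate once \Cref{thm:count-min-dichotomy} is in hand.
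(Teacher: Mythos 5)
Your proof is correct and takes essentially the route the paper intends: the corollary is stated there without an explicit proof, as a ``private case'' of \Cref{thm:count-min-dichotomy}, and your two directions are exactly the intended instantiation. Passing to the all-existential version of $Q$ (where free-connexity is automatic and the bad-path condition is vacuous, since it requires $x$ to be free) gives the positive side from the counting dichotomy, and trivializing the predicate via a domain shift so as to reduce from Boolean evaluation of the cyclic self-join-free $Q$ is the same domain-transformation trick the paper uses in its counting lower bound (\Cref{lem:counting-hard-side}).
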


Using the predicate elimination technique, we devise an algorithm for direct access with a min ranking function in \Cref{sec:da_min_alg}.
We match this algorithm with a lower bound in \Cref{sec:neg-da}.

\begin{theoremrep}[Ranked Direct Access]\label{thm:min-da}
    Let $Q$ be a CQ with no self-joins and $X$ a subset of its free variables. 
    $Q$ admits direct access according to $\min{X}$ with quasilinear preprocessing time and logarithmic access time if and only if $Q$ is acyclic free-connex and has no chordless $k$-path between two variables in $X$ with $k\geq 3$, assuming \hyperclique{} and \seth{}.
\end{theoremrep}
\begin{appendixproof}
    For the positive side, we first use \Cref{lem:cq-free-restrict} to restrict $Q$ to its free variables, and then use \Cref{thm:ranked-da-pos}. The negative side is given by \Cref{lem:DA-hard-side}.
\end{appendixproof}

Another consequence of our results is a dichotomy for direct access with minimum predicates under arbitrary order. The positive side can be achieved using predicate elimination, and the negative side is a consequence of the hardness of counting.

\begin{theoremrep}[Direct Access With a Predicate]\label{thm:da-pred}
    Let $Q$ be a self-join-free CQ, $X$ a subset of its variables, and $P \equiv (x \leq \min X)$ a predicate.  $Q\wedge P$ admits direct access (with arbitrary order) with quasilinear preprocessing time and logarithmic access time if and only if $Q$ is acyclic free-connex and it is not the case that $x$ is free and there is a chordless $k$-path between $x$ and another free variable in $X$ with $k\geq 3$, assuming \hyperclique{} and \seth{}. 
\end{theoremrep}
\begin{appendixproof}
    The positive side resembles our proof of \Cref{thm:ranked-da-pos}, but is much simpler.
    We first apply our predicate elimination result (\Cref{thm:remove-min}) to obtain a set of full acyclic CQs.
    Each CQ in the set supports direct access (for an arbirary order) with quasilinear preprocessing and logarithmic access time~\cite{nofar22random}.
    To order the answers between the queries in the set consistently, we order them by the position of the query in the set (e.g., the answers to the first query are ordered before the answers to the second query).
    In the preprocessing phase, we compute and store the answer count of each query, which is possible in linear time because the queries are full acyclic~\cite{mengel2025lower}.
    We get pairs of the form $(\qid,\qeq)$ that we extend by a prefix sum to an array of triples $(\qid,\qeq,\ps)$, where $\ps$ is the sum of $\qeq$ of all previous entries.
    In the access phase, when accessing a target index $k$, we binary search for the largest entry $i$ such that $k\geq \ps(i)$. Then, we access answer number $k-\ps(i)$ in the query $\qid(i)$ and return its projection to the variables of $Q$.
    
    The negative side is a consequence of our counting result: if we had direct access with quasilinear preprocessing time and polylogarithmic access, we would be able to binary search for the number of answers, and count in quasilinear time; this is not possible according to \Cref{thm:count-min-dichotomy}.
\end{appendixproof}

Unlike counting and direct access, we show dichotomies for enumeration tasks where the tractable cases are wider than those achievable with predicate elimination. We devise a direct algorithm for ranked enumeration in \Cref{sec:ranked-enum}, and a direct algorithm for enumeration with a min-predicate in \Cref{sec:pred-enum}. A lower bound for the cases not covered by these algorithms is a consequence of the known hardness of enumeration in the case with no ranking or predicates~\cite{bdg:dichotomy,braultbaron13thesis}.

\begin{theoremrep}[Ranked Enumeration]\label{thm:ranked-enum-dichotomy}
    Let $Q$ be a self-join-free CQ and $X$ a subset of its free variables. 
    $Q$ admits (ranked) enumeration according to $\min{X}$ with linear preprocessing time and constant delay if and only if $Q$ is acyclic free-connex, assuming \hyperclique{} and \BMM{}.
\end{theoremrep}
\begin{appendixproof}
    For the positive side, we use \Cref{lem:cq-free-restrict} to restrict the problem to its free variables and then apply \Cref{lem:ranked-enum-pos} on the resulting full acyclic CQ.
    The negative side follows from the hardness of enumeration without order requirements: The answers to a self-join-free CQ that is not acyclic free-connex cannot be enumerated with linear preprocessing and constant delay, assuming \hyperclique{} and \BMM{}~\cite{bdg:dichotomy,braultbaron13thesis}. 
\end{appendixproof}

\begin{theoremrep}[Enumeration With a Predicate]\label{thm:enum-pred}
    Let $Q$ be a self-join-free CQ, $X$ a subset of its variables, and $P \equiv (x \leq \min X)$ a predicate. Enumerating the answers of $Q\wedge P$ is possible with linear preprocessing and constant delay iff $Q$ is acyclic free-connex, assuming \hyperclique{} and \BMM{}.
\end{theoremrep}
\begin{appendixproof}
    For the positive side, use \Cref{lem:restrict-predicate-to-free} to restrict the problem to its free variables, and then apply \Cref{lem:enum-predicate-pos}.

    For the negative side, the answers to a self-join-free CQ that is not acyclic free-connex cannot be enumerated with linear preprocessing and constant delay, assuming \hyperclique{} and \BMM{}~\cite{bdg:dichotomy,braultbaron13thesis}. We can set the variables domain such that $x_0$ is always smaller than the other variables and the predicate always hold. Since $Q$ cannot be efficiently enumerated if it is not acyclic free-connex, neither can $Q\wedge P$.
\end{appendixproof}

\section{Predicate Elimination Algorithm}\label{sec:elimination}

In this section, we prove the positive case of \Cref{thm:remove-min},
showing that if the conditions are satisfied,
an efficient elimination of $x_0 \leq \min X$ exists.
Without loss of generality, we assume $x_0\notin X$.
We first consider full queries, and then handle projections in \Cref{sec:elimin-ext}.

\subsection{Predicate Elimination for Full Queries}

Our strategy rewrites the predicate $x_0 \leq \min X$
as a disjunction of partial orders $\pord$ on $X$ where $x_0$ is the minimum element.
These orders are disjoint in the sense that any variable assignment satisfies exactly one of them.
Each order $\pord$ can also be viewed as a conjunction of inequalities, which we denote as a predicate $P_\pord$.
Such a conjunction can be eliminated as long as there exists a join tree
such that the variables of each inequality appear in adjacent nodes~\cite{tziavelis21inequalities}.
Thus, if we guarantee that this condition holds for our chosen partial orders, we can successfully eliminate them. 
The following example shows how our elimination strategy works end-to-end.

\begin{example}\label{ex:partitioning-order}
We want to eliminate $x_0 \leq \min(x_1, x_2)$ from $Q(x_0, x_1, x_2, y) \allowbreak \datarule \allowbreak R_0(x_0), \allowbreak R_1(x_1, y), \allowbreak R_2(x_2, y)$.
This is in essence the same example as in \Cref{remark:composition}.
Assuming that no two variables can be assigned equal values (an easily enforceable assumption by creating disjoint domains for different variables), there are two possible cases where $x_0$ is the minimum:
Either $x_0 < x_1 < x_2$ or $x_0 < x_2 < x_1$.
We handle each case separately by choosing an appropriate join tree.
We pick $R_0 - R_1 - R_2$ for the first order, and $R_0 - R_2 - R_1$ for the second one.
In both cases, the inequalities can be eliminated in quasilinear time~\cite{tziavelis21inequalities},
yielding two distinct CQ–database pairs, one per order.
This elimination introduces a new variable for each join tree edge enforcing an inequality.
For example, for the first order, we obtain 
$Q_1(x_0, x_1, x_2, y, v_1, v_2) \allowbreak\datarule\allowbreak R_0'(x_0, v_1),\allowbreak R_1'(x_1, y, v_1, v_2),\allowbreak R_2'(x_2, y, v_2)$,
where $v_1, v_2$ are fresh variables and $R_0', R_1', R_2'$ are larger than the original relations by logarithmic factors.
To see why the partial order decomposition is needed,
observe that $x_0 \leq \min(x_1, x_2)$ can alternatively be viewed as a conjunction of $x_0 \leq x_1$ and $x_0 \leq x_2$.
However, these inequalities cannot be eliminated together because there is no single join tree that allows placing both on corresponding edges.
\end{example}

First, we formalize when a partial order can be eliminated.

\begin{definition}
    Let $\pord$ be a strict partial order over the variables 
    of a CQ $Q$.
    We say that a join tree of $Q$ enforces $\pord$ if for every pair of variables such that $x_1<x_2$ in $\pord$ and there is no $x_3$ with $x_1<x_3<x_2$, 
    we have that $x_1$ and $x_2$ either appear in the same node or in neighboring nodes.
\end{definition}

\begin{lemma}[\cite{tziavelis21inequalities}]
\label{cor:enforce_order}
Let $\pord$ be a strict partial order over the variables of a CQ $Q$.
If there exists a join tree of $Q$ that enforces $\pord$,
then $P_\pord$ can eliminated from $Q$ in quasilinear time.
\end{lemma}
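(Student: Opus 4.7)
My plan is to reduce the task to the single-inequality elimination primitive developed in~\cite{tziavelis21inequalities} and compose it edge-by-edge along the join tree. Because $\pord$ is a strict partial order, the predicate $P_\pord$ is logically equivalent to the conjunction of its \emph{cover} inequalities (the pairs $x_1 < x_2$ with no $x_3$ strictly in between); the remaining order relations are implied by transitivity. The enforcement hypothesis supplies, for every cover $(x_1, x_2)$, either a single join-tree node or two adjacent ones whose atoms together contain both variables. It therefore suffices to eliminate each cover separately while keeping the join tree valid for the remaining ones.

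First I would handle a cover $x_1 < x_2$ placed on an edge $(u,v)$. After sorting the relations at $u$ and $v$ in quasilinear time, I invoke the primitive of~\cite{tziavelis21inequalities}: introduce a fresh ``threshold'' attribute shared by both endpoints of the edge, populate it by scanning the sorted order so that a pair of tuples joins on this attribute if and only if it satisfies $x_1 < x_2$, and, because a single tuple may need to join with an entire prefix, apply the logarithmic binary segmentation of the sorted order so that each tuple is duplicated into $\bigO(\log |D|)$ canonical copies keyed by segment endpoints. The inequality then becomes an equality join on the fresh attribute; the two atoms become new materialized relations of size $\bigO(|D|\log|D|)$, and the predicate $x_1 < x_2$ is removed from the query. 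Covers whose two endpoints lie in a single atom are disposed of by one linear filtering pass.

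The main obstacle I anticipate is the composition of these per-cover steps. Because each elimination only augments the atoms at the two endpoints of one edge with fresh attributes that appear nowhere else, both the join tree and, in particular, the adjacencies exploited by the remaining covers are preserved. I can therefore process the constantly many covers one after another, each contributing at most one logarithmic factor, which keeps the overall time quasilinear. Correctness---namely that the answers of the resulting CQ are in bijection with $(Q \wedge P_\pord)(D)$ under projection to $\var(Q)$---follows from the fact that each step merely re-encodes one inequality as an equality join, which is precisely what the analysis in~\cite{tziavelis21inequalities} establishes.
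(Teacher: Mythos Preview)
Your proposal is essentially correct and matches the approach of the cited reference. Note that the paper itself does not prove this lemma: it is stated with a citation to~\cite{tziavelis21inequalities} and used as a black box, so there is no ``paper's own proof'' to compare against beyond that citation. Your sketch---reducing $P_\pord$ to the conjunction of its cover inequalities, placing each cover on a join-tree edge (or inside a node) via the enforcement hypothesis, and then eliminating each inequality via the fresh-variable/logarithmic-segmentation primitive of~\cite{tziavelis21inequalities}---is exactly how that reference establishes the result, and your argument that the per-edge modifications preserve the join tree and the remaining adjacencies is the right compositional observation. The only point worth tightening is the size accounting: when several covers share an endpoint, the log factors compound multiplicatively, but since the number of covers depends only on the fixed query this stays quasilinear, as you implicitly assume.
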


We cannot always simply use all total orders between the $X$ variables in which $x_0$ is the minimum, but we may need partial orders tailored to our query. As an example, to handle the query $Q(x_0, x_1, x_2, y_1, y_2) \allowbreak\datarule\allowbreak R_1(x_1, y_1),\allowbreak R_0(x_0, y_1, y_2),\allowbreak R_2(x_2, y_2)$ with
$x0\leq \min(x1,x2)$, we can compare $x_0$ directly with $x_1$ and $x_2$, but no join tree lets us compare $x_1$ with $x_2$.
Our challenge is therefore to carefully choose partial orders such that, in addition to having an enforcing join tree for each partial order, their combination covers all possibilities where $x_0$ is the minimum.
To formalize this notion, we say that a set of strict partial orders $\pordset$ forms a partition of a set of strict total orders $O$ if:
(1) For every $\pord \in \pordset$, all strict total orders that extend $\pord$ are in $O$;
(2) For every $o\in O$, there exists exactly one $\pord \in \pordset$ that can be extended to $o$.
We also say that $x_0$ is the minimum element in a strict partial order over a set $X \cup \{x_0\}$ if, for all $x\in X$, the strict partial order requires that $x_0<x$.

\begin{lemma}\label{lem:partitioning}
    Let $Q$ be a full acyclic CQ containing a set of variables $X$ and $x_0\notin X$, such that there is no chordless $k$-path between two variables in $X \cup \{x_0\}$ with $k\geq 3$. Then, there exists a set $\pordset$ of strict partial orders of $X \cup \{x_0\}$ such that:
    \begin{itemize}
        \item $\pordset$ constitutes a partition of the strict total orders of $X \cup \{x_0\}$ in which $x_0$ is the minimum.
        \item For every $\pord \in \pordset$, there exists a join tree of $Q$ that enforces $\pord$.
    \end{itemize}
\end{lemma}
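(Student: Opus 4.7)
The plan is to take $\pordset$ to be the set of all strict total orders of $X \cup \{x_0\}$ in which $x_0$ is the minimum. Since $|X|$ is a query constant, this is a constant-sized collection, and it trivially partitions the total orders in which $x_0$ is minimum (each such order extends itself and no other). The entire problem then reduces to showing that for each total order $o : x_0 < x_{i_1} < \cdots < x_{i_k}$ there exists a join tree of $Q$ that enforces $o$, i.e., that places each pair of consecutive variables in the same node or in neighboring nodes.

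The crucial input is that the no-long-chordless-path assumption gives a strong local property: for any two variables $u, v \in X \cup \{x_0\}$ there is either a single atom of $Q$ containing both, or a single intermediate variable $w$ together with one atom containing $\{u, w\}$ and another containing $\{v, w\}$. For every consecutive pair in $o$, I can therefore realize a ``comparison edge'' locally by either sharing a node or inserting a node corresponding to (a restriction of) the atom carrying $w$. The construction I plan to use is incremental: starting from an atom containing $x_0$, I maintain a trunk of nodes reflecting the prefix of $o$ processed so far, and at step $j$ attach the required local gadget for the pair $(x_{i_{j-1}}, x_{i_j})$. All remaining atoms of $Q$ are then grafted as side branches off the trunk at any node that already exposes one of their variables, which is possible because $Q$ is acyclic and the trunk uses only atoms (or restrictions of atoms) of $Q$.

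The main obstacle will be preserving the running intersection property throughout the incremental process. Each time I attach an intermediate node for a variable $w$, that variable may already occur earlier in the trunk or inside a previously attached branch, so all its occurrences must end up forming a connected subtree. I plan to handle this by, at each step, either reusing an existing $w$-node already on the trunk instead of creating a fresh one, or by delaying the grafting of $w$-containing branches until the trunk is complete and then attaching each of them adjacent to the trunk's $w$-node. Combining this with the liberty, granted by the paper's relaxed join-tree notion, to insert restricted nodes wherever we need to expose a shared variable will, I expect, let the construction go through uniformly for every total order $o$ in $\pordset$, and then the conclusion of the lemma follows.
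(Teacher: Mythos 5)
Your opening move --- taking $\pordset$ to be \emph{all} strict total orders with $x_0$ minimum --- is exactly the move the paper warns against. The discussion immediately preceding the lemma gives the query $Q(x_0,x_1,x_2,y_1,y_2) \datarule R_1(x_1,y_1), R_0(x_0,y_1,y_2), R_2(x_2,y_2)$, for which ``no join tree lets us compare $x_1$ with $x_2$'': the total order $x_0<x_1<x_2$ has no enforcing join tree, yet this is precisely the kind of query the lemma must cover when it is invoked in the proof of \Cref{thm:remove-min} (whose condition restricts only chordless paths emanating from $x_0$). Likewise, \Cref{ex:elim} explicitly exhibits a total order ($x_0<x_3<x_2$) that cannot be enforced. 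The paper's whole point is that one must output query-tailored \emph{partial} orders --- leaving incomparable those pairs of $X$-variables that no join tree can place in adjacent nodes --- and it constructs them recursively over a maximally-branching join tree (\Cref{alg:eliminate}, \Cref{lem:alg-elimination-correct}), branch by branch, choosing only which branch variable is the local minimum. Your reduction to ``every total order is enforceable'' is therefore the wrong target.

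Even granting your reading of the hypothesis (every pair in $X\cup\{x_0\}$ at hypergraph distance at most $2$), the step you would still owe is the hard one, and your sketch does not supply it. Having a common atom or a common neighbor $w$ for each \emph{individual} consecutive pair does not mean all these local gadgets coexist in a single tree satisfying the running intersection property: the same atom may be demanded at two different positions of your trunk, the connector $w$ for one pair may already be pinned elsewhere by another pair or by a non-$X$ atom, and the claim that the remaining atoms can be ``grafted as side branches off the trunk at any node that already exposes one of their variables, which is possible because $Q$ is acyclic'' is false as stated --- acyclicity guarantees that \emph{some} join tree exists, not that an arbitrary partially built tree of (restricted) atoms extends to one. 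You flag this obstacle yourself and resolve it only with ``I expect the construction to go through,'' which is where the actual content of the lemma lives. As written, the proposal has a genuine gap in its key step and, for the instances the lemma is designed for, pursues a construction that cannot succeed.
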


To prove \Cref{lem:partitioning},
we describe a recursive algorithm that builds the required partition along with enforcing join trees.
The algorithm starts from a given join tree, but rearranges it to ensure that certain orders are enforced.
A \emph{rearrangement} of a join tree is a join tree with the same nodes, possibly connected differently.
One possible way of rearranging a given tree is making it as `shallow and wide' as possible; We call such a join tree \emph{maximally branching}.

\begin{definition}[Maximally-Branching Tree]
    A rooted join tree is \emph{maximally branching} if we cannot disconnect any node from its parent and connect it to a different ancestor to obtain a join tree.    
\end{definition}

Given an arbitrary join tree, we can easily turn it into maximally-branching by repeatedly fixing violations in a root-to-leaf order.

\begin{toappendix}
    \begin{observation}\label{obs:turn-max-branching}
    Any join tree admits a rearrangement into a maximally-branching join tree with the same root.
\end{observation}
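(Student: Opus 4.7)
The plan is to construct the rearrangement by iteratively applying local maximally-branching moves. Starting from the given rooted join tree $T_0$, as long as $T_i$ is not maximally branching, by definition there exist a non-root node $v$ and some ancestor $a$ of $v$ other than its current parent $p$ such that detaching $v$ (together with its entire subtree) and reattaching it as a child of $a$ still yields a join tree. Since $a$ must be a proper ancestor of $p$, such a move shifts $v$ and all its descendants strictly upward. Perform this move to obtain $T_{i+1}$. Because moves only reattach non-root nodes to existing nodes of the tree, and because the root has no parent, the root is preserved throughout.

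To establish termination, I would use the potential $\Phi(T) = \sum_{v} \mathrm{depth}_T(v)$, with the root at depth $0$. After a move that reattaches $v$ from parent $p$ to ancestor $a$ of $p$, the depth of $v$ drops from $\mathrm{depth}(p)+1$ to $\mathrm{depth}(a)+1$, which is a strict decrease; the depths of all descendants of $v$ decrease by the same amount (the subtree moves as a rigid block), and the depths of all other nodes are unchanged. Hence $\Phi$ strictly decreases with each move. Since $\Phi$ is a non-negative integer, the procedure halts after finitely many steps. When it halts, the final tree has the same node set and root as $T_0$, is a join tree by construction, and admits no further moves, so it is maximally branching by definition.

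The only subtlety to verify is that the join tree property is preserved after each step; this is automatic because the precondition for performing a move is exactly that the resulting tree is a join tree. Thus no combinatorial obstacle arises, and the potential-function argument yields the observation. The main point to double-check in writing the proof is simply the bookkeeping that a move strictly lifts $v$ and its whole subtree by the same number of levels, so no fine-grained case analysis is required.
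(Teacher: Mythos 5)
Your proof is correct. The core move is the same as in the paper's proof (detach a violating node from its parent and reattach it to an ancestor), but your termination argument differs: the paper processes nodes once in a fixed root-to-leaf order and argues that a fixed node never becomes violated again because its ancestors do not change afterwards, whereas you allow moves in arbitrary order and prove termination via the potential $\Phi(T)=\sum_v \mathrm{depth}_T(v)$, which strictly decreases because any ancestor of $v$ other than its parent is a proper ancestor of the parent, so $v$ and its whole subtree move strictly upward. Your version is slightly more self-contained (it needs no claim about the order of processing and no argument that earlier fixes are not undone), at the cost of a possibly longer, though still finite, sequence of moves; the paper's version buys a single-pass procedure. Both correctly preserve the root and the join-tree property, since the precondition of each move is exactly that the result is a join tree.
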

\begin{proof}
    Repeatedly fix violations by handling all nodes in a root-to-leaf manner. When handling a node, disconnect it from its parent, and reconnect it to the highest possible ancestor (that preserves the join tree property).
    A node that is fixed will not create further violations because its ancestors do not change after fixing it.
\end{proof}
\end{toappendix}

\begin{figure}
    \centering
    \includegraphics[width=\linewidth]{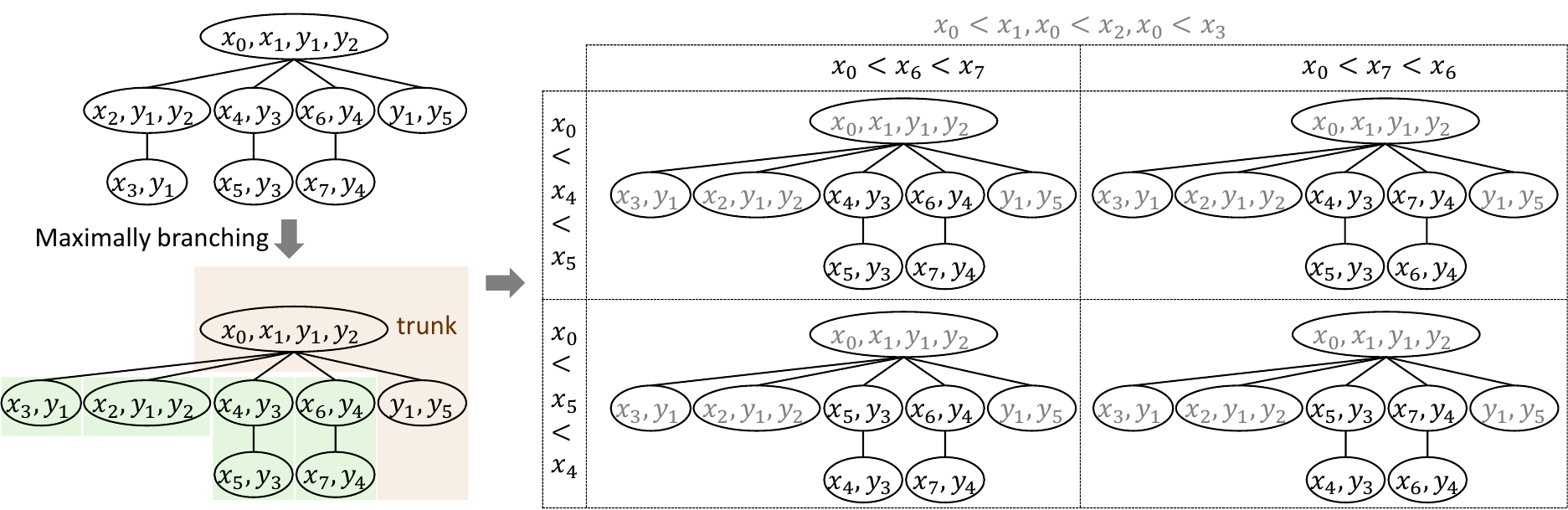}
    \caption{\Cref{ex:elim}: For an example join tree, partitioning $x_0 < \min \{ x_1, \ldots, x_7 \}$ into a set of partial orders and enforcing join trees. In the four resulting join trees, the common parts are shown in gray.}
    \label{fig:example_elim}
\end{figure}

To be able to capture all efficient cases, the algorithm works only with maximally-branching trees.
\Cref{alg:eliminate} shows the pseudocode, using the following notation:
Given a variable $x$,
$\neighbors{T}(x)$ denotes all variables that appear in a node of $T$ with $x$,
and $\highest{T}(x)$ the closest node to the root that contains $x$. This is uniquely defined because of the join-tree property.
Given a node $v$, 
$\subtree{T}{v}$ denotes the subtree of $T$ rooted in $v$,
and $\parent{T}(v)$ the parent node of $v$ in $T$.
Given a subtree $T'$ and a set $X$ of variables, $\restrict{X}{T'}$ denotes the subset of $X$ variables that appear in $T'$.
Finally, a union of graphs is the graph consisting of the union of the vertices and the union of the edges.

Nodes that contain $x_0$ are referred to as the \emph{trunk}.
Non-trunk nodes that have a parent in the trunk are called \emph{branch roots}, and the subtree rooted in a branch root is called a \emph{branch}.
The algorithm only considers branches that contain $X$ variables that do not appear in the trunk; the rest of the branches are not relevant to handle the predicate and are treated as part of the trunk.

Inequality variables that appear in the trunk can be directly compared with $x_0$ (lines~\ref{line:trunkvars}-\ref{line:trunk-inequalities}).
Non-trunk variables are partitioned according to the branch in which they appear (line~\ref{line:branchvars}), and each branch is considered independently.
For each branch, we consider all possibilities for the variable $x$ which is the minimum among the relevant variables in the branch (line~\ref{line:choose-branch-min}).
After recursing on the subtree, the tree is rearranged so that a node that contains $x$ is connected directly to the trunk, enforcing the inequality $x_0 < x$ (lines~\ref{line:branch-inequality}-\ref{line:branch-edge}).

\begin{example}
\label{ex:elim}
    \Cref{fig:example_elim} shows an example application of the algorithm, producing four join trees and associated orders.
    First, the tree is rearranged to become maximally branching.
    This puts $x_0$ and $x_3$ on neighboring nodes, which allows enforcing the inequality $x_0<x_3$ in addition to $x_0<x_2$.
    We remark that we cannot avoid this rearrangement and support these two inequalities as one branch because, while we can enforce the order $x_0 < x_2 < x_3$, we cannot enforce the order $x_0 < x_3 < x_2$.
    Also observe that $\{y_1, y_5\}$ does not contain any $X$ variables, and so it is considered part of the trunk. 
    The branch with inequality variables $x_4,x_5$ admits two potential orders, each one with a corresponding rearrangement of the tree.
    The same is true for the branch with $x_6,x_7$, resulting in four combinations when putting the branches together.
\end{example}

\begin{algorithm}
\footnotesize
\SetKwFunction{RecFun}{part\_min\_orders\_rec}
\SetKwProg{Fn}{Function}{:}{}

\textbf{Input}: rooted join tree $T$, variable $x_0$ in the root, subset $X$ of the variables of $T$, $x_0\notin X$\\
\textbf{Output}: a set of (partial order, enforcing join tree) pairs, as specified in \Cref{lem:alg-elimination-correct} \\

\Return $\mathrm{\RecFun}(T, x_0, X)$\;

\Fn{\RecFun{$T, x_0, X$}}{

    Rearrange $T$ so that it is maximally branching\;
    $\trunkvars=X\cap \neighbors{T}(x_0)$ \label{line:trunkvars} \algocomment{The $X$ variables that appear in the trunk}
    $\trunkpord=\{x_0<x\mid x\in\trunkvars\}$ \label{line:trunk-inequalities} \algocomment{The inequalities that can be enforced in the trunk} 
    $\branchRoots=\{\text{nodes } r\mid x_0\notin r, x_0\in\parent{T}(r), \restrict{(X\setminus \trunkvars)}{\subtree{T}{r}}\neq\emptyset\}$\;
    $\trunktree=$ copy of $T$ with the subtrees rooted in $\branchRoots$ removed \algocomment{The trunk}
    \For(\algocommentinline{Each branch is handled independently, setting inequalities between $X_r$ and $x_0$}){$r\in \branchRoots$ }
    {
        $X_r = \restrict{(X\setminus \trunkvars)}{\subtree{T}{r}}$ \label{line:branchvars} \algocomment{The relevant $X$ variables in this branch} 
        Initialize $\resr{r}$ as empty\;
        \For(\algocommentinline{Consider the case where $x$ is the minimum among $X_r$ \phantom{aaaaaaaaaaaaaaaaaaaaaaaaaa}}){$x\in X_r$ \label{line:choose-branch-min}} 
        { 
        $S = \mathrm{\RecFun}(\subtree{T}{r}, x, X_r)$\; \label{line:recursion}
        Add $x_0<x$ to every order in $S$\; \label{line:branch-inequality}
        Add edge $(\parent{T}(r), \highest{T}(x))$ to each tree in $S$ \label{line:branch-edge} \algocomment{Reconnect the subtree to the trunk} 
        Insert $S$ to $\resr{r}$\;
        }
    }
    Initialize \res{} as empty\;
    
    \For(\algocommentinline{All combinations of order-tree from each branch}){$(\pord_1,T_1), \ldots, (\pord_b, T_b)$ $\in$ $\bigtimes_{r \in \branchRoots} \resr{r}$}
    {
        Insert $(\trunkpord \cup \pord_1 \cup \ldots \cup \pord_b, \trunktree \cup T_1 \cup \ldots \cup T_b)$ to \res\; \label{line:res-combination}
    }
    
    \Return \res\;

}

\caption{Partition Min Orders}
\label{alg:eliminate}
\end{algorithm}

\begin{toappendix}
In order to prove \Cref{lem:branching-paths}, we need two auxiliary lemmas for rearranging join trees.

\begin{lemma}\label{lem:rearrange}
    Let $v_1,v_2,...,v_k$ be a path on a join tree. If $v_1\cap v_2\subseteq v_k$, then removing the edge $v_1-v_2$ and adding the edge $v_1-v_k$ results in a join tree.
\end{lemma}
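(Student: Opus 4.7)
I would verify two things: first that the modified graph $T'$ remains a tree, and second that the join-tree property is preserved for every variable.

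For the tree structure, removing the edge $v_1 - v_2$ splits $T$ into exactly two components: one contains $v_1$, and the other contains $v_2, v_3, \ldots, v_k$, because the subpath $v_2, v_3, \ldots, v_k$ is still intact in $T \setminus \{v_1 - v_2\}$. Adding the edge $v_1 - v_k$ then reconnects the two components with a single edge, producing a connected graph on the same vertex set with $|V(T)| - 1$ edges, hence a tree.

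For the join-tree property, I would fix a variable $x$ and let $S_x$ denote the set of nodes of $T$ that contain $x$; by hypothesis $S_x$ induces a connected subtree of $T$, and the goal is to show the same in $T'$. The only way the edge swap can damage connectivity of $S_x$ is if the removed edge $v_1 - v_2$ was actually an edge of the induced subgraph on $S_x$, i.e., if both $v_1$ and $v_2$ contain $x$. In every other case the removed edge contributes nothing to $S_x$'s induced edges, and adding $v_1 - v_k$ can only add an edge (when $v_k$ also contains $x$), so connectivity trivially survives.

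The main and only interesting case, which is also where the hypothesis $v_1 \cap v_2 \subseteq v_k$ gets used, is when $x \in v_1 \cap v_2$. The hypothesis immediately yields $x \in v_k$, so $v_k \in S_x$. Applying the join-tree property of $T$ to $v_2$ and $v_k$, the unique $T$-path between them, namely $v_2, v_3, \ldots, v_k$, must lie entirely inside $S_x$. Consequently, in $T'$ the path $v_2, v_3, \ldots, v_k, v_1$ uses only nodes of $S_x$ and reconnects the two halves that the edge removal separated. The crux of the argument, and the only subtle step, is this last observation: the hypothesis $v_1 \cap v_2 \subseteq v_k$ is precisely what guarantees that the new edge $v_1 - v_k$ ``covers'' every variable the deleted edge was carrying, so no $S_x$ can be broken by the swap.
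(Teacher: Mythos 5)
Your proof is correct and follows essentially the same approach as the paper's: case-split on whether the variable lies in $v_1 \cap v_2$, use the hypothesis to place it in $v_k$, and invoke the join-tree property of the original tree to conclude that the whole path $v_1,\ldots,v_k$ carries the variable so the new edge reconnects its subtree. The only addition is your explicit check that the edge swap yields a tree, which the paper leaves implicit.
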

\begin{proof}
    We use the characterization of a join tree by the property that every variable appears in a (connected) subtree. Consider a variable $u$. If it does not occur in both $v_1$ and $v_2$, removing the edge $v_1-v_2$ does not disconnect the subtree of nodes that contain $u$, and the property still holds. Otherwise, removing $v_1-v_2$ disconnects that subtree. Since $v_1\cap v_2\subseteq v_k$, we have that $u$ occurs in $v_k$. Thus, both endpoints of the path $v_1,v_2,...,v_k$ contain $u$. Since the original tree is a join tree, all variables on that path contain $u$. Adding the edge $v_1-v_k$ therefore reconnects the two parts of the subtree containing $u$.
\end{proof}

The next lemma shows why maximally branching trees are useful.
Paths on these trees imply paths between variables in these nodes with limited appearances in other parts of the trees.

\begin{lemma}\label{lem:branching-paths}
For a maximally-branching join tree, the following hold:
    \begin{enumerate}
    \item Given a non-root node $v_p$ and its child $v_c$, we have that $v_c$ and $v_p$ share a variable that does not appear in any ancestors of $v_p$.\label{prop:mbranch-parent}
    \item Given a non-root node $v_a$ and its descendant $v_d$ such that $x_a$ and $x_d$ are variables in $v_a$ and $v_d$ respectively, there is a path from $x_a$ to $x_d$ such that the intermediate variables in the path do not appear in ancestors of $v_a$.\label{prop:mbranch-ancestor}
    \end{enumerate}
\end{lemma}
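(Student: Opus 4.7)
The plan is to prove the two items separately: Part (1) by contradiction using the maximal-branching hypothesis together with the rearrangement procedure from \Cref{lem:rearrange}, and Part (2) by induction on the tree-distance from $v_a$ to $v_d$, with Part (1) supplying the inductive step.

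For Part (1), I would suppose toward contradiction that every variable in $v_p \cap v_c$ appears in some strict ancestor of $v_p$. By the join-tree connectedness property (each variable occupies a connected subtree of $T$), any such variable must also appear in $\parent{T}(v_p)$, since $\parent{T}(v_p)$ lies on the path from $v_p$ to that ancestor. Hence $v_p \cap v_c \subseteq \parent{T}(v_p)$, so \Cref{lem:rearrange} allows deleting the edge $v_p$--$v_c$ and adding the edge $\parent{T}(v_p)$--$v_c$ while preserving the join-tree property. Since $\parent{T}(v_p)$ is a strict ancestor of $v_p$, this contradicts the fact that $T$ is maximally branching and proves Part (1).

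For Part (2), I would induct on the distance from $v_a$ to $v_d$ in $T$. In the base case $v_d = v_a$, the variables $x_a$ and $x_d$ are both in $v_a$, hence are neighbors, and the two-vertex path $x_a, x_d$ has no intermediate variables. For the inductive step, let $v_c$ be the child of $v_a$ on the $v_a$-to-$v_d$ path. By Part (1) applied to $(v_a, v_c)$, pick a variable $z \in v_a \cap v_c$ that occurs in no strict ancestor of $v_a$. Applying the inductive hypothesis to $(v_c, z, v_d, x_d)$ yields a path from $z$ to $x_d$ whose intermediate variables avoid strict ancestors of $v_c$; since these ancestors include $v_a$, such variables avoid $v_a$ and everything above. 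Prepending $x_a$ (which shares $v_a$ with $z$ and is therefore its neighbor) gives the desired path $x_a, z, \ldots, x_d$, and the only new intermediate variable is $z$, which lies in no strict ancestor of $v_a$ by construction.

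The most delicate point is the bookkeeping around the word ``ancestors'': the statement forbids intermediate variables from appearing in strict ancestors of $v_a$ but permits occurrences in $v_a$ itself, which is exactly what makes the bridge variable $z$ valid. The induction automatically sharpens this constraint for variables drawn from the sub-path at $v_c$, so composing the sub-path with the new leading edge $x_a$--$z$ raises no conflict.
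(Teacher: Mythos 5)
Your proof is correct and follows essentially the same route as the paper's: Part (1) is the identical contradiction argument via the join-tree connectedness property and \Cref{lem:rearrange}, and Part (2) is the paper's ``apply Part (1) to each parent--child pair along the path'' argument, merely recast as an explicit induction on the distance from $v_a$ to $v_d$. Your write-up of Part (2) is in fact more detailed than the paper's one-line justification, and the bookkeeping about strict ancestors versus $v_a$ itself is handled correctly.
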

\begin{proof}
Assume by way of contradiction that Property~(\ref{prop:mbranch-parent}) does not hold, and denote by $v_g$ the parent of $v_p$.
Then, every variable in $v_p\cap v_c$ appears in an ancestor of $v_p$, and due to the join-tree property, it also appears in $v_g$.
We get that $v_p\cap v_c\subseteq v_g$, and by \Cref{lem:rearrange}, it is possible to disconnect $v_c$ from $v_p$ and connect it to $v_g$, contradicting the fact that the join-tree is maximally branching.
To prove Property~(\ref{prop:mbranch-ancestor}), apply Property~(\ref{prop:mbranch-parent}) to every parent and child on the path from $v_a$ to $v_d$.
\end{proof}
\end{toappendix}

The following lemma establishes the correctness of the algorithm.
The fact that the tree is maximally branching and that there are no long chordless paths between variables in $X\cup\{x_0\}$ is used as part of proving that the constructed trees are indeed join trees.

\begin{lemmarep}\label{lem:alg-elimination-correct}
    Consider a rooted join-tree $T$, a variable $x_0$ that appears in its root, and a subset $X$ (not containing $x_0$) of the variables in the tree, and assume there is no chordless $k$-path between two variables in $X\cup \{x_0\}$ with $k\geq 3$.
    \Cref{alg:eliminate} returns a set of pairs $(\pord,t)$ such that:
    (1) $\pord$ is a strict partial order over $X \cup \{x_0\}$;
    (2) $t$ is a rearrangement of $T$;
    (3) $t$ enforces $\pord$; and
    (4) The set of all returned partial orders constitutes a partition of the strict total orders of $X \cup \{x_0\}$ in which $x_0$ is the minimum.
\end{lemmarep}
\begin{appendixproof}

We proceed by induction.
The base case is that every $X$ variable appears in a node with $x_0$.
In this case, there are no branches. The algorithm returns one pair $(\trunkpord,\trunktree)$ with $\trunkpord$ being the strict partial order comprising $x_0<x$ for all $x\in X$. Indeed, $\{\trunkpord\}$ constitutes the trivial partition, $\trunktree=T$ is a trivial rearrangement, and it enforces $\trunkpord$.

Next, consider the general case.
The algorithm partitions the set of target variables to trunk variables (line~\ref{line:trunkvars}) and non-trunk variables according to the branch in which they appear (line~\ref{line:branchvars}).
Different iterations of line~\ref{line:branchvars} produce disjoint sets $X_r$ since the variables do not appear in the trunk, and each variable needs to belong to a connected subset of join-tree nodes.
Trunk variables are explicitly required to be larger than $x_0$ (line~\ref{line:trunk-inequalities}).
For each branch, we separate into cases according to the minimum variable in the branch (line~\ref{line:choose-branch-min}). Then, the algorithm ensures that $x_0$ is smaller than all branch variables by adding the inequality stating that $x_0$ is smaller than the branch minimum (line~\ref{line:branch-inequality}).
Since the target variables in each branch are disjoint, the constructed inequalities consistently form a strict partial order (there are no cycles of inequalities).
Thus, the returned orders are valid and they form the required partition.

We now claim that within each returned pair, the tree enforces the order. Trunk inequalities (line~\ref{line:trunk-inequalities}) are enforced within trunk nodes. The inequalities in the result of the recursive call are enforced by the subtree given by the recursive call (line~\ref{line:recursion}), and the added inequalities with the branch minimum variables (line~\ref{line:branch-inequality}) are enforced by the added edges (line~\ref{line:branch-edge}) since $x_0\in\parent{R}(r)$ and $x\in\highest{T}(x)$.

It is left to prove that the trees returned by the algorithm are indeed join trees (satisfying the join-tree property).
We will show that $\parent{T}(r)\cap r\subseteq \highest{T}(x)$.
According to \Cref{lem:rearrange}, removing the edge $(\parent{T}(r),r)$ and adding the edge $(\parent{T}(r),\highest{T}(X))$ gives a valid join tree. Replacing the edges within the branch with the rearrangement given by the recursion also preserves the join tree property.

Assume by way of contradiction that $\parent{T}(r)$ and $r$ share a variable $y$ that is not in $\highest{T}(x)$. 
We will show that in that case, a chordless path of length 3 or more must exist.
Since the tree is maximally branching, according to \Cref{lem:branching-paths}, $\highest{T}(x)$ and its parent $v_p$ have a variable $y'$ in common that does not appear above the parent.
According to \Cref{lem:branching-paths} again, there is a path from $r$ to $v_p$ that uses only variables that do not appear in $\parent{T}(r)$, which implies a path from $y$ to $y'$ where no variables other than $y$ appear in $\parent{T}(r)$. Take a chordless shortening of this path from $y$ to $y'$ (it may be the case that $y$ and $y'$ are neighbors, $r=v_p$, and the chordless path is of length one), and extend it to the path: $x_0-y-...-y'-x$. We claim that this extended path is chordless too.
Since $x_0$ does not appear below $\parent{T}(r)$, and no variables other than $y$ appear in $\parent{T}(r)$, no chords involve $x_0$. Since $y'$ appears in $\highest{T}(x)$ and the path $y-...-y'$ is chordless, no variables on this path appear in $\highest{T}(x)$, and since $x$ does not appear above $\highest{T}(x)$, we have that $x$ is not part of any chords either. 
Thus, $x_0-y-...-y'-x$ is a chordless path of length $3$ or more, which is a contradiction.
\end{appendixproof}

\begin{proof}[Proof of \Cref{lem:partitioning}]
Due to \Cref{lem:alg-elimination-correct}, 
\Cref{lem:partitioning} is obtained by taking a join tree $T$ for $Q$ rooted in a node containing $x_0$ and then applying \Cref{alg:eliminate} with $T$, $x_0$, and $X$.
\end{proof}

We are now in position to show the positive side of \Cref{thm:remove-min} for full CQs.

\begin{proofsketch}
To be able to work with strict inequalities and strict total orders,
we transform the domains of the variables to ensure they are disjoint:
For a variable $v$, we replace the domain element $c$ with the tuple $(c,v)$.
That is, comparisons of distinct values $c$ are handled normally,
and ties across different variables are handled by comparing the variable identifiers, with $x_0$ having the smallest identifier.
After this transformation, we apply \Cref{lem:partitioning} and \Cref{cor:enforce_order}.
\end{proofsketch}

\subsection{Predicate Elimination with Projection}
\label{sec:elimin-ext}

We now consider the case where the min-predicate contains existential variables.
This case is challenging even for an acyclic free-connex query, since we cannot reduce it to a full query via the standard reduction (\Cref{lem:existential-elimination}), unless we first eliminate the inequalities that involve existential variables.
Fortunately, we show that such an elimination is always possible for all acyclic queries.

\begin{toappendix}
    
\begin{lemma}
\label{lem:max_value}
Let $Q$ be a full acyclic CQ, $y$ one of its variables, $\alpha$ one of its atoms, and $D$ a database.
We can compute in linear time a mapping from the assignments of $\var(\alpha)$ to
the maximum $y$ value that appears together with that assignment in $Q(D)$.
\end{lemma}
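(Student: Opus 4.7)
The plan is to reduce this to an instance of $\AGG$ and invoke \Cref{lem:semiring-agg} with the tropical max-plus semiring $\sem = (\mathbb{R} \cup \{-\infty\}, \max, +, -\infty, 0)$. This is a commutative semiring: addition distributes over $\max$ (since $a + \max(b,c) = \max(a+b, a+c)$) and $-\infty$ is absorbing for $+$. The key insight is that a full answer assigns a single value to $y$, so we can load that value onto one chosen atom and zero out the rest; then $\max$-of-$+$ over partial answers collapses to $\max$-of-$y$.

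Concretely, I would first construct a rooted join tree $T$ of $Q$ with a node containing $\alpha$ placed at the root, which is possible because $Q$ is acyclic and $\alpha$ corresponds to a hyperedge of $\calH(Q)$. Since $y \in \var(Q)$, I pick some atom $\beta$ appearing in $T$ with $y \in \var(\beta)$ (allowing $\beta = \alpha$ when $y \in \var(\alpha)$). I then define the value function $\val(t)$ to be the value assigned to $y$ in $t$ whenever $t$ is a tuple of the $\beta$-relation, and $\val(t) = 0 = \Sone$ for every other tuple. Applying \Cref{lem:semiring-agg} to the instance $(D, T, \val, \max, +)$ runs in $O(|D|)$ time and produces $\agg(t_\alpha)$ for every tuple $t_\alpha$ of the root relation, which is exactly the set of assignments to $\var(\alpha)$.

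For correctness, unfolding the definition gives $\agg(t_\alpha) = \max_{p \in \sub_T(t_\alpha)} \sum_{t' \in p} \val(t')$. Each partial answer $p$ contains exactly one tuple $t_\beta$ from the $\beta$-relation, so the inner sum equals $y(t_\beta)$. Because $Q$ is full and $T$ spans all atoms, the partial answers in $\sub_T(t_\alpha)$ are in bijection with the answers $q \in Q(D)$ whose projection onto $\var(\alpha)$ equals $t_\alpha$, and each such $q$ shares its $y$-value with its $t_\beta$. Hence $\agg(t_\alpha)$ is precisely the maximum $y$ value co-occurring with $t_\alpha$ in $Q(D)$, or $-\infty$ if no completion exists (in which case $t_\alpha$ is dropped from the output mapping). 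Storing the pairs $(t_\alpha, \agg(t_\alpha))$ in a table indexed by $t_\alpha$ takes linear time, yielding the desired mapping.

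There is no deep obstacle here; the main step is recognizing the right semiring and that loading the $y$-contribution onto a single atom $\beta$ makes the semiring aggregate compute the desired maximum. A minor bookkeeping point is ensuring $\alpha$ can be used as the root of the join tree, which is immediate for acyclic CQs where atoms are themselves hyperedges.
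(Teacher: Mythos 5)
Your proposal is correct and matches the paper's proof essentially verbatim: both use \Cref{lem:semiring-agg} with the max-tropical semiring $(\max,+)$, a join tree rooted at $\alpha$, and a value function that loads the $y$ value onto one atom containing $y$ and assigns $0$ elsewhere. Your write-up just spells out the correctness argument in slightly more detail (the bijection between root partial answers and query answers, and the handling of dangling tuples via $-\infty$).
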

\begin{proof}
We use \Cref{lem:semiring-agg} with: 
(1) the max-tropical semiring $(\N \cup \{ -\infty \}, \max, +, -\infty, 0)$,
(2) an input function $\val$ that associates the tuples of one of the relations containing $y$ with the $y$ value,
and all other tuples with $0$, and
(3) a join tree with $\alpha$ as the root node.
Consequently, the aggregation result with $\otimes$ (within a query answer) will always be equal to $y$,
and the aggregation result with $\oplus$ (across query answers) will return their maximum $y$ value.
Since the tuples $t$ of the root relation correspond to the different assignments of $\var(\alpha)$,
the computed values $\agg(t)$ give the desired result.
\end{proof}

\end{toappendix}

\begin{lemmarep}\label{lem:existential-elimination}
Let $Q$ be an acyclic CQ, $y$ an existential variable, and $P \equiv (x \leq y)$ an inequality predicate. Given a database $D$, we can construct in linear time a database $D’$ by removing tuples from $D$ such that $(Q \wedge P)(D) = Q(D')$.
\end{lemmarep}
\begin{proofsketch}
    For each tuple in an atom containing $x$, we compute using \Cref{lem:semiring-agg} the maximum $y$ value $y_\text{max}$ that appears with it in query answers, and we discard the tuples with $x > y_\text{max}$.
\end{proofsketch}
\begin{appendixproof}

    Pick an arbitrary atom $\alpha$ of the query that contains the variable $x$, and let $R_\alpha$ be the corresponding relation.
    Also, let $Q_f$ be the same as $Q$, but with all variables free.
    We apply \Cref{lem:max_value} on $Q_f$, $P$, and $D$ to compute,
    for each tuple $t \in R_\alpha$,
    the maximum $y$ value $m_y(t)$ that agrees with $t$ in $Q_f(D)$.
    Then, we scan $R_\alpha$ and remove all tuples that do not satisfy $t[x] \leq m_y(t)$.\footnote{\label{footnote:other-elim}If we want to handle a predicate of the form $x<\min X$ with $x\notin X$ instead, we need to use this lemma with a predicate of the form $x<y$, and here we remove all tuples that do not satisfy the strict inequality $t[x] < m_y(t)$.}
    We claim that the obtained database $D'$ satisfies our lemma.

    First, we show that for $q \in (Q \wedge P)(D)$, we have that $q \in Q(D')$.
    Since $q$ is a query answer to $Q$ together with $P$, there is an assignment $q_e$ for existential variables such that $P$ is satisfied for $q \cup q_e$.
    This means that $q_f = q \cup q_e$ is in $(Q_f \wedge P)(D)$ and that $q_f$ assigns values $x_f$ and $y_f$ to $x$ and $y$ such that $x_f \leq y_f$.
    Let $t$ be the restriction of $q_f$ on $\var(\alpha)$.
    The value $m_y(t)$ computed by \Cref{lem:max_value} for $t$ will have to be at least $y_f$, and so $t$ will not be removed from $R_\alpha$.
    Since only tuples from the relation $R_\alpha$ are removed by our process, it follows that $q_f$ will appear in $Q_f(D')$ and $q$ in $Q(D')$.

    Conversely, we show that for $q \in Q(D')$, we have that $q \in (Q \wedge P)(D)$.
    For $q$ to be in $Q(D')$, its restriction $t$ on the $\alpha$ variables must satisfy $t[x] \leq m_y(t)$, so $x \leq y$ in $q$, and $q$ satisfies $P$.
\end{appendixproof}

We apply this lemma to remove all existential variables from a predicate.
Given a predicate $P \equiv (x \leq \min X)$, we define the restriction of $P$ to a set of variables $S$ to be the predicate $(x\leq \min (X\cap S))$ if $x\in S$ and $X\cap S\neq\emptyset$, or `True' otherwise.
Given a CQ $Q$ with a predicate $P$, we define their restriction to free variables to be $Q'\wedge P'$, where $Q'$ and $P'$ are restrictions of $Q$ and $P$ to the free variables of $Q$. 

\begin{lemmarep}\label{lem:restrict-predicate-to-free}
    Let $Q$ be an acyclic free-connex CQ, $X$ a subset of its variables, and $P \equiv (x \leq \min X)$ a predicate. Let $Q'\wedge P'$ be the restriction of $Q\wedge P$ to the free variables.
    Given a database $D$, it is possible to build in linear time a database $D'$ such that $(Q\wedge P)(D)=(Q'\wedge P')(D')$.
\end{lemmarep}
\begin{proofsketch}
    The min-predicate is treated as a conjunction $\bigwedge_{x_i \in X} x \leq x_i$,
    and the inequalities that involve existential variables are iteratively removed by \Cref{lem:existential-elimination}.
\end{proofsketch}
\begin{appendixproof}
    We treat $P=(x \leq \min X)$ as the conjunction of inequalities $\bigwedge_{x_i \in X} x \leq x_i$. Then, $P'$ is a conjunction of those inequalities that contain two free variables.
    Given a database $D$, we eliminate all inequalities $P_e$ that involve at least one existential variable to get a database $D''$.
    More specifically, if $P_i$ is the $i$-th inequality in the conjunction $P_e$ (in arbitrary order), we eliminate $P_i$ from $Q$ using \Cref{lem:existential-elimination} and obtain a database $D_i$.
    Then, we repeat the process for the next inequality starting from $Q$ and $D_i$.
    At the end of each iteration, $Q(D_i)$ satisfies the conjunction up to $P_i$.
    The last database we construct is $D''$, for which $(Q\wedge P')(D'') = (Q\wedge P' \wedge P_e)(D) = (Q \wedge P)(D)$.
    We then use \Cref{lem:cq-free-restrict} on $Q$ and $D''$ to eliminate the existential variables from $Q$, and we get $D'$ such that $Q(D'')=Q'(D')$.
    Overall, $(Q\wedge P)(D)=(Q\wedge P')(D'')=(Q'\wedge P')(D')$.
\end{appendixproof}

By using \Cref{lem:restrict-predicate-to-free} to restrict the query to its free variables and applying the elimination for full queries (\Cref{sec:elimination}), we complete 
the proof of the positive side of \Cref{thm:remove-min}.

\section{Direct Access Algorithm}\label{sec:da_min_alg}

To establish the 
positive side of \Cref{thm:min-da},
we present an algorithm called $\MINDA$ that relies on the elimination algorithm of \Cref{thm:remove-min}.
By \Cref{lem:cq-free-restrict}, it is enough to consider full CQs.

\begin{lemmarep}\label{thm:ranked-da-pos}
    Given a full acyclic CQ $Q$, a subset $X$ of free variables such that there is no chordless $k$-path between two variables in $X$ with $k \geq 3$, and a database $D$,
    direct access according to $\min{X}$ is possible with quasilinear preprocessing time and logarithmic access time.
\end{lemmarep}
\begin{appendixproof}
    The lemma is proved via the $\MINDA$ algorithm discussed in \Cref{sec:da_min_alg}.

    The preparation steps take linear time,
    materializing a constant number of new relations to eliminate self-joins,
    and modifying the domain values in a single pass.
    Elimination of the min-predicates (\Cref{thm:remove-min}) takes quasilinear time and there are $\bigO(1)$ queries in the resulting set $S$. 
    For each such query, the secondary direct access structure is built in quasilinear time using the algorithm of Carmeli et al.~\cite{carmeli23tractable} for a partial lexicographic order that includes a single variable.
    
    To construct the direct access array, we first consider each CQ in $S$ separately and compute $\qeq$ for each domain value $\minval$ of the variable $x$ designated to be the minimum. 
    Since the CQ is full and acyclic, we can compute these counts in linear time using \Cref{lem:semiring-agg} with: 
    (1) the counting semiring $(\N, \times, +, 1, 0)$,
    (2) an input function $\val$ that assigns $1$ to all tuples in the database,
    and (3) a join tree where the root is a node that contains only the variable $x$.
    So far, we computed pairs of the form $(\minval,\qeq)$. We sort these pairs according to $\minval$ and compute $\qps$ using a prefix sum. Then, we add the identifier $\qid$ and merge the sorted arrays coming from the different queries in $S$. Once we have the merged array sorted by $\minval$ and then $\qid$, we compute $\tps$ using a prefix sum. All these operations can be done in linear time.

    Let us explain the correctness of the algorithm when accessing an index $k$ using the entry $i$. The number of query answers that have a strictly smaller minimum value (after the domain modification from the first paragraph) is $\tps(i)$. To break ties between answers with the same minimum value (that also come from the same query $\qid(i)$), we have to access the answer with index $k-\tps(i)$ out of these answers. This is the answer with index $k-\tps(i)+\qps(i)$ out of the answers to the CQ $\qid(i)$ because we have to ``skip'' $\qps(i)$ answers within $\qid(i)$ to arrive at the target minimum value. In effect, our tie-breaking scheme orders the answers to $Q$ first by the identifier $\qid(i)$, and then by the tie-breaking scheme of the direct access structure of $\qid(i)$.
\end{appendixproof}

Our $\MINDA$ algorithm that proves \Cref{thm:ranked-da-pos} works as follows.

\paragraph{Preparation Steps} 
We start by removing any self-joins and modifying the database so that two variables are never assigned the same value, as in the proof of \Cref{thm:remove-min}.
For every variable $y$, the domain element $c$ is replaced with the pair $(c,y)$. Here, it does not matter how we break ties, and we can use an arbitrary order between the variable identifiers. From now on, in each query answer, the minimum $X$ value can be uniquely attributed to one variable $x$.

\paragraph{Partition \& Eliminate}
Next, $\MINDA$ constructs a collection of query-database pairs
that form a partition of the query answers. 
In each query-database pair, there is one designated variable that always gets assigned the minimum value out of $X$.
More specifically, for each $x \in X$, we apply \Cref{thm:remove-min} on $Q\wedge (x \leq \min X)$, and union all the resulting sets of query-database pairs into a set $S$.
In the following, we refer to $S$ as a set of queries and implicitly assume that each CQ is associated with its own database.

\paragraph{Secondary DA Structures}
For each query in $S$, we build a direct access structure that orders its answers by the minimum $X$ value. Since the minimum is always obtained by a designated variable $x$, it is enough to sort by $x$. Sorting by $x$ can also be seen as a partial lexicographic order $\langle x \rangle$. Since this lexicographic order has only one variable and the query is full acyclic, a structure with this order and logarithmic-time access can be built in quasilinear time~\cite{carmeli23tractable}.

\paragraph{Min-DA Array}
Next, we build an array of linear size in the input that represents all sorted query answers.
Each entry represents a set of answers that come from the same CQ in $S$ and have the same minimum value. An entry is comprised of five elements:
    \begin{itemize}
        \item $\minval$: A value of $\min X$.
        \item $\qid$: A query identifier for the queries in the set $S$.
        \item $\qeq$: The number of answers to $\qid$ in which $\min X$ is $\minval$.
        \item $\qps$: The number of answers to $\qid$ in which $\min X$ is strictly smaller than $\minval$.
        \item $\tps$: The sum of $\qeq$ of all preceding entries in the array.
    \end{itemize}
    The entries are sorted lexicographically by $\minval$ and then by $\qid$ (and only then $\tps$ is computed).
    As we explain in the appendix, this array can be built in quasilinear time, using \Cref{lem:semiring-agg} to compute $\qeq$ and using prefix sums to compute $\qps$ and $\tps$.

\paragraph{Access Phase}
When accessing a target index $k$, we first find the largest entry $i$ such that $k\geq \tps(i)$. Then, we access the answer with index $k-\tps(i)+\qps(i)$ in the query $\qid(i)$ and return its projection to the variables of $Q$.
The relevant entry can be found in logarithmic time using binary search, and the modified target index can be accessed within the CQ $\qid(i)$ in logarithmic time using the corresponding secondary DA structure.

\begin{figure}
    \centering
    \includegraphics[width=0.85\linewidth]{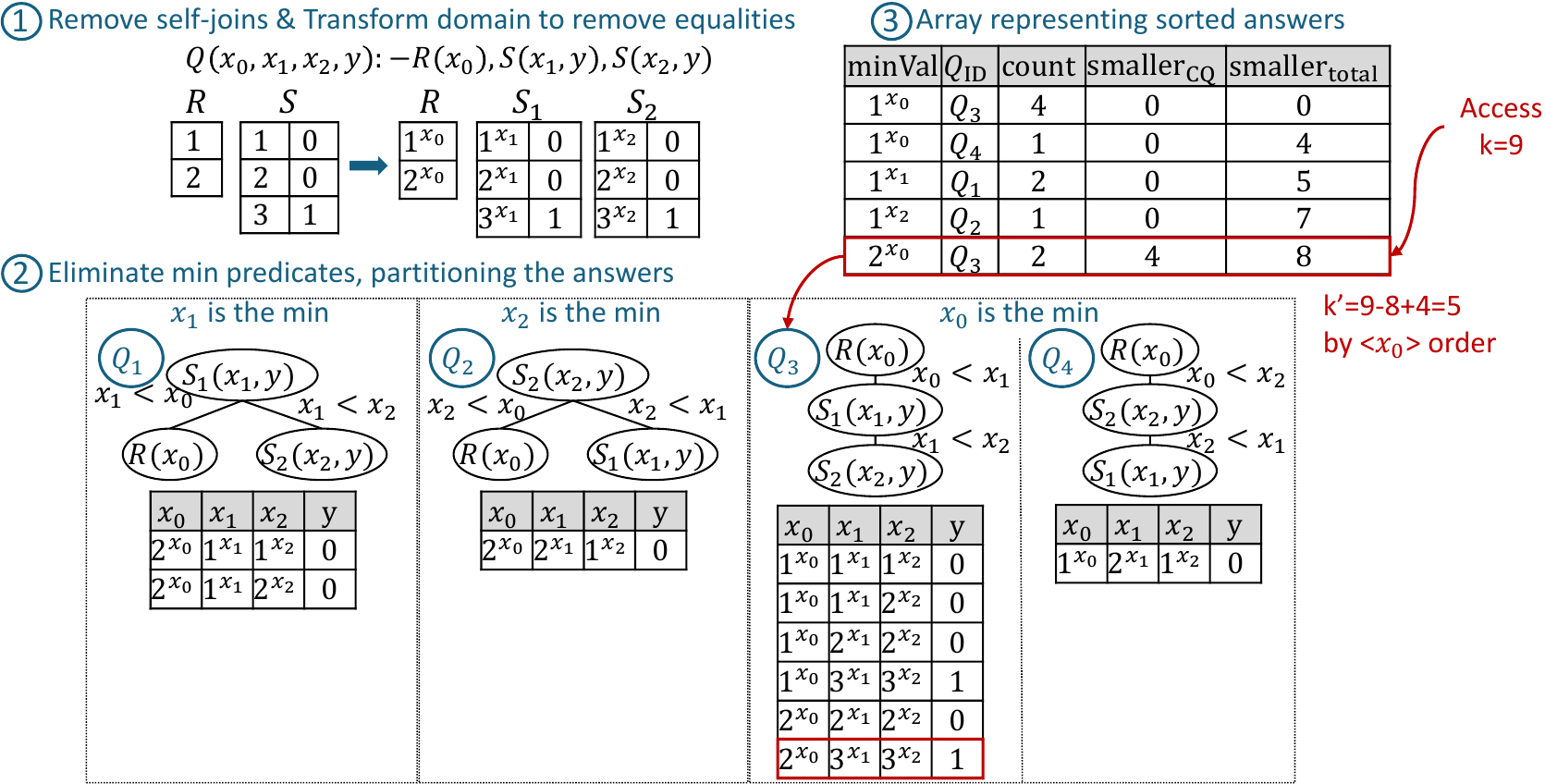}
    \caption{Example direct access for the full CQ with body $R(x_0),R(x_1,y),R(x_2,y)$ by $\min\{x_0,x_1,x_2\}$. For visualization purposes, the transformed domain values are shown with superscript (e.g., $1^{x_0}$ instead of $(1, x_0)$).}
    \label{fig:da_example}
\end{figure}

\begin{example}
\Cref{fig:da_example} shows an example execution of our algorithm. After removing self-joins and transforming the domain so that two distinct variables are never equal, the ranking function $\min\{x_0, x_1, x_2\}$ produces three cases, with each variable taking on the minimum value.
For each case, we invoke our elimination algorithm (\Cref{sec:elimination}).
For $x_1$ and $x_2$, each elimination results in a single CQ, while the $x_0$-minimum predicate is eliminated into 2 CQs (and corresponding databases).
The figure shows the query answers that each query produces, which are not actually materialized by the algorithm.
Instead, the shown inequalities are eliminated according to the enforcing join tree~\cite{tziavelis21inequalities}.
For example, the inequalities of $Q_3$ are handled by introducing two fresh variables $v_1, v_2$ and relations $R'(x_0, v_1), S_1'(x_1, y, v_1, v_2), S_2'(x_2, y, v_2)$.
The array representing the sorted answers is shown on the top right. An access call with index $k=9$ (with zero indexing) is handled by binary search on $\tps$ to find the largest value smaller than $k$, which directs us to the relevant query $Q_3$.
Next, we need to find answer number $k-\tps=1$ out of the answers to $Q_3$ with $\minval$ $2^{x_0}$, which is answer number $1+\qps=5$ out of all answers to $Q_3$. 
A secondary direct access structure by $\langle x_0 \rangle$ on $Q_3$ gives us the answer $(x_0, x_1, x_2, y) = (2, 2, 2, 0)$ after dropping the inequality variables $v_1, v_2$ and reverting back to the original domain.
\end{example}

\section{Enumeration}\label{sec:enum}

\subsection{Enumeration with min-predicates}\label{sec:pred-enum}

We now turn our attention to enumeration for acyclic free-connex CQs with a min-predicate $x \leq \min X$. Using \Cref{lem:restrict-predicate-to-free}, it is enough to consider full queries.
Based on our results so far, one strategy is to first eliminate the min-predicate, and then apply a standard enumeration algorithm to the resulting query.
However, an efficient such elimination is only possible under the restrictive no-chordless-$k$-path condition of \Cref{thm:remove-min}.
For queries where this condition does not hold, does enumeration fundamentally require higher complexity, or do we need a different approach? As an example, the query $Q(x_0, u, v, x_1, x_2) \allowbreak \datarule \allowbreak R_0(x_0,u), \allowbreak R_1(u,v), \allowbreak R_2(v,x_1), \allowbreak R_2(x_1,x_2)$ with $x_0 \leq \min(x_1, x_2)$ does not admit efficient trimming. It is also not supported by the known enumeration algorithm for CQs with inequality predicates~\cite{wang2022comparisons}.
We show that this query, as well as all full acyclic CQs with min predicates, admits efficient enumeration.

\begin{lemmarep}\label{lem:enum-predicate-pos}
    Let $Q$ be a full acyclic CQ, $X$ a subset of its variables, and $P \equiv (x_0 \leq \min X)$ a predicate. Then, enumerating the answers of $Q\wedge P$ is possible with linear preprocessing and constant delay.
\end{lemmarep}
\begin{proofsketch}
We take a join tree rooted in an atom containing $x_0$.
For each fact in every node of the join tree, its \emph{threshold} is the maximum of $\min X'$ over all answers to the subtree rooted at that node that use this fact, where $X'$ is the subset of $X$ appearing in the subtree.
We compute the threshold of all facts using \Cref{lem:semiring-agg} with the max-min semiring. 
Enumeration proceeds top-down in the join tree, using only facts whose threshold is at least the current $x_0$ value.
\end{proofsketch}
\begin{appendixproof}
    Take a join tree for $Q$ rooted in an atom containing $x_0$.
    For every fact of every join-tree node, we define the \e{threshold} to be the maximum of $\min X'$ over all answers to the subtree rooted in this node that use this fact, where $X\setminus{x_0}$ is the subset of $X$ relevant to this subtree. 
    We compute the threshold for all facts using \Cref{lem:semiring-agg} with: 
    (1) the max-min semiring $(\N \cup \{ -\infty, +\infty \}, \max, \min, -\infty, +\infty)$, and
    (2) an input function $\val$ that associates the tuples of each relation with the minimum value among the $X\setminus{x_0}$ variables it contains (and $+\infty$ if none of these variables appears). 
    
    Notice that there are no answers to $Q\wedge P$ that assign $x_0$ with a value $c$ and use a fact $f$ if $c$ is larger than the threshold for $f$.
    Indeed, every answer to $Q$ must in particular induce an answer to the subtree; if the threshold is smaller than $c$, this means that, for every answer in the subtree, the minimum of some of the $X$ variables is smaller than $c$, so no answer in the subtree can be extended to satisfy the predicate when $x_0$ is assigned $c$.
    Thus, the assignment to $x_0$ can be used as an upper bound on the threshold values for facts we should consider.
    All we need to do at this point is to efficiently generate answers without considering facts whose threshold is below the $x_0$ value we consider.

    After computing all thresholds, our enumeration algorithm is a simple
    modification of the standard enumeration algorithm for acyclic joins~\cite{Berkholz20tutorial}.
    The latter works as follows.
    We start with a semi-join reduction bottom-up in the join tree, as in the Yannakakis algorithm~\cite{Yannakakis}.
    Now all facts that can be reached with top-down traversals participate in at least one query answer.
    To facilitate joins, in the preprocessing phase we also 
    split every relation into buckets such that facts in the same bucket agree on the assignments to the variables that this node shares with its parent. 
    The enumeration phase follows a nested-loop join, where we
    iterate over the join tree top-down (i.e., considering a node only after considering its parent). The loop iterates over all facts in the root relation, and while considering a specific fact, we go over all facts in a bucket of a child node that matches the currently considered assignment to the join variables. When all nodes are handled, the considered facts form an answer, and all answers are obtained this way. 
    
    In our variation of this algorithm, we sort every bucket in increasing order according to the threshold values. To only explore facts whose threshold is at least the $x_0$ value, we just need to stop the iteration over each bucket when reaching a threshold value smaller than $x_0$.\footnote{\label{footnote:other-predicate-enum}If we want to prove this lemma for a predicate of the form $\min X>x$ with $x\notin X$ instead, we would stop the iteration when reaching a threshold value smaller \emph{or equal} to the $x_0$ value.}

    From the discussion above, it is clear that this procedure does not miss answers: we already argued that the combination of $x_0$ assignments and facts that we do not consider in this process does not lead to answers to $Q\wedge P$.
    It is also clear that every answer we produce is a valid answer: when we produce an answer, we have selected one fact for each atom, where all facts are compatible with each other; for every $x_i\in X\setminus\{x_0\}$, we chose a fact $f$ containing $x_i$ and verified that the assignment for $x_0$ is at most the threshold for $f$, which is in turn at most the assignment we chose for $x_i$ (because all answers in this subtree assign this same value for $x_i$ and assignments to other variables can only make the threshold smaller).
    It is left to argue that the delay obtained using this procedure remains constant. We need to claim that, in every non-root bucket we consider, there is at least one fact whose threshold is at least the $x_0$ value we consider. Consider the fact chosen in the parent node that leads to this bucket. Since we explore the bucket, it means that the threshold of this parent fact is at least $x_0$. By definition of this threshold, there is an answer to the subtree of this parent for which the minimum $X$ assignment is the threshold. This answer uses some fact from the child bucket, and this child fact has a threshold at least the $x_0$ value.
\end{appendixproof}

This proves the positive side of \Cref{thm:enum-pred}. 
The negative side follows from the known hardness of self-join-free CQs that are not acyclic free-connex~\cite{bdg:dichotomy,braultbaron13thesis} since we can set the variable domains such that $x_0$ is always smaller than the other variables and the predicate always holds.

\subsection{Ranked Enumeration with Min Ordering}\label{sec:ranked-enum}

Next, we prove \Cref{thm:ranked-enum-dichotomy}.
The hardness of enumeration with arbitrary order extends to ranked enumeration: self-join-free CQs that are not acyclic free-connex do not admit efficient enumeration with any ordering, assuming Hyperclique and BMM.
In contrast, acyclic free-connex queries admit efficient ranked enumeration by $\min X$ for any $X$.
Known algorithms~\cite{deep25ranked,tziavelis25anyk} achieve logarithmic delay using appropriate priority queue structures.
We improve this to constant delay using a simpler algorithm.
By \Cref{lem:cq-free-restrict}, it is enough to consider full queries.

\begin{lemmarep}\label{lem:ranked-enum-pos}
    Let $Q$ be a full acyclic CQ and $X$ a subset of its variables. 
    $Q$ admits enumeration according to $\min{X}$ with linear preprocessing time and constant delay.
\end{lemmarep}
\begin{proofsketch}
    We enumerate in parallel the answers ranked by $x_i$ for each $x_i\in X$. At each step, we print the minimum answer across them, and we ignore answers already printed. The delay between answers can be regularized using the ``Cheater's Lemma''~\cite{carmeli2021UCQs}.
\end{proofsketch}
\begin{appendixproof}
    First, notice that, given $x_i\in X$, it is possible to enumerate the answers of $Q$ in increasing order of the $x_i$ values with linear preprocessing and constant delay. To this end, it suffices to use the well-known Yannakakis algorithm on a join tree rooted in an atom containing $x_i$ where the relation for this atom is sorted by the $x_i$ values.
    We use $S_i$ to denote a structure implementing this algorithm that allows: checking if there are more answers, reading the current answer (we are allowed to ask to read this answer several times), and advancing to the next answer.

    To print the answers according to the minimum $X$ value,
    we use these structures in parallel for every $x_i\in X$. For every structure $S_i$ that has remaining answers, let $a_i$ be its current answer, and denote by $v_i$ the value that $a_i$ assigns $x_i$; we choose $r$ such that $v_r$ is the minimum out of all $v_i$ values; in case of ties, we take the smallest $i$ that gives the minimum. Then, we print $a_r$ and advance $S_r$ to the following answer.

    This process produces every answer $|X|$ times. If we only consider the first time each answer is produced, the answers are sorted in increasing order of $\min X$ as required.
    To avoid the duplicates, we choose one structure to be in charge of printing each answer: the structure that proposes the answer when it is first chosen. 
    If $x_i$ is assigned the minimum value out of $X$ in an answer,
    then $S_i$ is responsible for that answer;
    in case of ties, we choose the smallest $i$.
    Before a structure prints an answer, we check if it is assigned to it, and if not, we simply ignore the answer, advance the structure to the following answer, and repeat the process. 

    The above fix removes duplicates but may compromise the delay. 
    We claim that the algorithm runs in linear partial time. That is, the time from the beginning of the execution up to the $k$-th answer is $O(|D|+k)$.
    It is known that a linear partial time algorithm can be transformed into a linear preprocessing and constant delay algorithm~\cite{carmeli2021UCQs}\footnote{The transformation into a constant delay algorithm works by withholding answers and delaying the time in which they are produced to regularize the delay. This transformation shows that, in a sense, linear partial time is not worse than linear preprocessing and constant delay. However, it has two disadvantages~\cite[Section~2.3]{tziavelis25anyk}: (1) some answers may be printed later than in the original algorithm; (2) a large space consumption may be required to store answers seen but not yet printed. Thus, while this transformation is useful to prove that we can get the theoretical guarantee we seek, from a practical point of view, it makes sense to keep the original algorithm without the constant delay guarantee.}.
    From our choice of structure in charge, we have that whenever an answer is skipped, it was already printed through a different structure. So, at any point during the execution, if the number of answers printed is $k$, the number of skips performed is at most $k(|X|-1)$. Since $|X|$ is a constant and every skip requires a constant number of operations, the time elapsed from the beginning of the enumeration process until we print the $k$-th answer is at most $O(k)$. Together with the preprocessing, this gives $O(|D|+k)$, so linear partial time as needed.
\end{appendixproof}

\section{Lower Bounds}\label{sec:hardness}

In this section, we prove lower bounds matching our upper bounds for the problems of predicate elimination, counting, and direct access.
Enumeration is not mentioned in this section since our dichotomies use known lower bounds (see \Cref{sec:enum}).
As predicate elimination and direct access can be used for counting, we will use the hardness of counting to prove hardness for all three tasks. 

\subsection{Counting and Predicate Elimination}\label{sec:count-neg}

We first show the hardness of counting answers with identical end-points of a long chordless path.

\begin{lemmarep}\label{lem:counting-path-hard}
    Let $Q$ be a self-join-free CQ, and assume there is a chordless $k$-path between the variables $x_1$ and $x_2$ with $k\geq 3$. Then, we cannot count the number of answers to $Q$ that satisfy $x_1=x_2$ in quasilinear time, assuming \hyperclique{}.
\end{lemmarep}
\begin{proofsketch}
Given an undirected graph, we construct a database for which an assignment to the chordless $k$-path in the query corresponds to a $3$-path in the graph with end-points $x_1$, $x_2$. A $3$-path with equal end-points represents a graph triangle. Since detecting triangles in the graph is not possible in quasilinear time assuming \hyperclique, we conclude that we cannot count the answers with $x_1 = x_2$ in that time.
\end{proofsketch}
\begin{appendixproof}
    We build on the assumption that it is not possible to detect triangles in a graph in quasilinear time in the number of its edges, which is a direct consequence of \hyperclique{}. So, assuming we are given an (undirected) graph with edges $E$, we will show that quasilinear time counting for $Q$ with $x_1 = x_2$ entails that we can detect a triangle in quasilinear time.

    Consider a chordless $k$-path $x_1,y_1,\ldots,y_{k-1},x_2$ of $Q$ with $k\geq 3$, and denote the atoms on this path by $R_1(\vec{v_1}),\ldots,R_k(\vec{v_k})$.
    We construct a database over which this path corresponds to a $3$-path.
    Consider the atom $R_1(\vec{v_1})$.
    For every edge $e=\{a,b\}\in E$ with $a\neq b$, we define the function $f_1^e$ by $f_1^e(x_1)=a$, $f_1^e(y_1)=b$, and all other variables map to the constant $\bot$. Then, we add the fact $R_1(f_1^e(\vec{v_1}))$ to our database. We proceed similarly for two more atoms on the path.
    We define the functions $f_2^e$ and $f_k^e$ by $f_2^e(y_1)=a$, $f_2^e(y_2)=b$, $f_k^e(y_{k-1})=a$, $f_k^e(x_2)=b$ and all other variables map to the constant $\bot$ in both functions. We add the facts $R_2(f_2^e(\vec{v_2}))$ and $R_k(f_k^e(\vec{v_k}))$ to our database.
    Lastly, we define the function $f^e$ to assign any variable of the path with $a$ and all other variables with $\bot$. 
    For every atom $S(\vec{u})$ of the query other than $R_1(\vec{v_1})$, $R_2(\vec{v_2})$ and $R_k(\vec{v_k})$, add the fact $S(f^e(\vec{u}))$.

    Over this construction, the first two atoms on the path, as well as the last one, correspond to graph edges. We also have that $y_i=y_{i+1}$ for all $i \geq 2$, and all variables that are not on the path always correspond to the constant $\bot$. Thus, the query answers correspond to the $3$-paths in the graph, and the end-points of such paths are the assignments to $x_1$ and $x_2$. The answer assignments that satisfy $x_1=x_2$ therefore correspond to the triangles in the input graph, and counting these answers determines the existence of triangles (if there are answers, there are triangles). Since detecting triangles is not possible in quasilinear time (assuming \hyperclique{}), and the construction here can be done with quasilinear time, we conclude that counting the query answers that satisfy $x_1 = x_2$ cannot be done in quasilinear time.  
\end{appendixproof}

In order to use \Cref{lem:counting-path-hard} to prove the hardness of queries with minimum predicates, we show next that counting with minimum predicates allows counting with equality predicates.

\begin{lemmarep}\label{lem:counting-min-to-equality}
    Let $Q$ be a self-join-free CQ, $X$ a subset of its variables, $x_1,x_2\in X$ free variables, and $P \equiv (x_1 \leq \min X)$ a predicate.
    If $Q\wedge P$ admits counting in $O(t)$ time, then we can count the number of answers to $Q$ that satisfy $x_1=x_2$ in $O(t)$ time.
\end{lemmarep}
\begin{proofsketch}
    To compute the number of answers that satisfy $x_1=x_2$, we subtract the number of answers that satisfy $x_1<x_2$ from those that satisfy $x_1\leq x_2$. Each of the latter two quantities can be computed using the predicate $x_1 \leq \min X$ after a linear-time domain transformation.
\end{proofsketch}
\begin{appendixproof}
    Consider a database $D$, and recall that we assume that domain values are integers.
    Given that $\mathtt{cnt}(Q(D)\mid x_1=x_2)=\mathtt{cnt}(Q(D)\mid x_1\leq x_2)-\mathtt{cnt}(Q(D)\mid x_1<x_2)$, it is enough to compute the last two terms.

    To compute the number of answers where $x_1\leq x_2$, we replace every value $v$ in the domain of $x_1$ or $x_2$ by $(1,v)$, and we replace every value $v$ in the domain of the other variables by $(2,v)$. This way, other variables are always assigned a larger value than that of $x_1$ and $x_2$, and the answers in which $x_1\leq x_2$ are exactly those in which $x_1 \leq \min X$. Thus, $Q\wedge P$ computes exactly this count.

    To compute the number of answers where $x_1<x_2$, we replace every value $v$ in the domain of $x_1$ by $(1,v+1)$.
    We replace every value $v$ for $x_2$ by $(1,v)$ and every value $v$ for the other variables by $(2,v)$ as before. In this case, $Q\wedge P$ computes the answers in which $x_1+1\leq x_2$, which are exactly those where $x_1<x_2$.
    \footnote{\label{footnote:other-predicate-neg}If we want to prove \Cref{lem:counting-min-to-equality} for a predicate of the form $\min X>x$ with $x\notin X$ instead, we can compute the two quantities similarly.
    For the answers where $x_1<x_2$, replace every value $v$ in the domain of $x_1$ or $x_2$ by $(1,v)$, and replace every value $v$ for the other variables by $(2,v)$.
    For the answers where $x_1\leq x_2$, replace every value $v$ in the domain of $x_1$ by $(1,v-1)$, replace every value $v$ for $x_2$ by $(1,v)$, and replace every value $v$ for the other variables by $(2,v)$.}
\end{appendixproof}

We now deduce the hardness of counting for queries with minimum predicates.

\begin{lemmarep}\label{lem:counting-hard-side}
    Let $Q$ be a self-join-free CQ, $X$ a subset of its variables, and $P \equiv (x \leq \min X)$ a predicate.  
    Counting $Q\wedge P$ is not possible in quasilinear time if $Q$ is not acyclic free-connex, or if $x$ is free and $Q$ has a chordless $k$-path between $x$ and another free variable in $X$ with $k\geq 3$, assuming \hyperclique{} and \seth{}.
\end{lemmarep}
\begin{proofsketch}
    If $Q$ is not acyclic free-connex, we cannot efficiently count the answers to $Q$~\cite{mengel2025lower}, so it is enough to transform the domain such that $x$ is always the smallest and $P$ always holds.
    Otherwise, \Cref{lem:counting-path-hard} implies that we cannot count the answers in which the end-points of the chordless path are equal, and then \Cref{lem:counting-min-to-equality} implies that we cannot count for $Q\wedge P$.
\end{proofsketch}
\begin{appendixproof}
    The first case is that $Q$ is not acyclic free-connex. In this case, we set the domain of $x$ to be smaller than that of the other variables, e.g. by replacing every value $v$ in the domain of $x$ by $(1,v)$ and every value $v$ in the domain of the other variables by $(2,v)$. This way, the predicate $x\leq\min X$ is always satisfied, and $Q\wedge P$ gives the same answers as $Q$. Quasilinear time counting for a CQ that is not free-connex acyclic contradicts \hyperclique{} or \seth{}~\cite{mengel2025lower}.

    The second case is that $Q$ has a chordless $k$-path between the free variable $x$ and another free variable $y\in X$ with $k\geq 3$. If $Q\wedge P$ admits quasilinear-time counting, we can use \Cref{lem:counting-min-to-equality} to count the answers in which $x=y$ in quasilinear time, which contradicts \hyperclique{} according to \Cref{lem:counting-path-hard}.
\end{appendixproof}

\Cref{lem:counting-hard-side} gives the negative side of the counting dichotomy (\Cref{thm:count-min-dichotomy}). As a full acyclic elimination allows efficient counting, this directly implies the negative side of the predicate elimination dichotomy (\Cref{thm:remove-min}).
We next discuss direct access.

\subsection{Direct Access}\label{sec:neg-da}

We first recall that direct access can be used for counting by binary searching for out-of-bounds.

\begin{observationrep}[\cite{nofar22random}]\label{obs:DA-for-counting}
    Direct access for any query $Q$ (in any order) with $t_p$ preprocessing time and $t_a$ access time can be used to count the answers to $Q$ over a database $D$ in time $O(t_p+t_a\log|D|)$.
\end{observationrep}
\begin{appendixproof}
    This can be done using binary search since the access calls are required to return out-of-bound when the requested index is larger than the number of answers.
    The number of access calls required is logarithmic in the number of answers. In data complexity, the number of answers is bounded by a polynomial in the database size, so the number of access calls is also logarithmic in the size of the input database.
\end{appendixproof}

We will rely on \Cref{obs:DA-for-counting} to show the hardness of queries that are not acyclic free-connex. For the other queries, we will rely on the hardness of counting answers with identical end-points of a long chordless path, as given by \Cref{lem:counting-path-hard}. For this, we need to show that direct access according to an order defined by a minimum condition allows counting with equality predicates.

\begin{lemmarep}\label{lem:DA-to-equality-counting}
    Let $Q$ be a self-join-free CQ that admits direct access according to $\min X$ with $t_p$ preprocessing time and $t_a$ access time, and let $x_1,x_2\in X$. 
    Then, we can count the number of answers to $Q$ that satisfy $x_1=x_2$ over a database $D$ in time $O(t_p+|\dom(D)|\cdot t_a\log|D|)$.
\end{lemmarep}
\begin{proofsketch}
    Given a domain value $v$, denote by $C_{\min}(v)$, $C_1(v)$, and $C_2(v)$ the number of query answers that satisfy $v=\min\{x_1,x_2\}$, $v=x_1<x_2$, and $v=x_2<x_1$ respectively.
    We want to compute the number of answers that satisfy $x_1=x_2$, which is equal to $\sum_{v \in \dom(D)}{\left(C_{\min}(v)-C_1(v)-C_2(v)\right)}$.

    For $C_{\min}$, we transform the domain such that $\min{X}=\min\{x_1,x_2\}$. We binary search using direct access calls to find all the answer indices where the value of $\min\{x_1,x_2\}$ changes, and thus compute $C_{\min}(v)$ for every domain value $v$.
    For $C_{1}$, we further transform the domain such that the values of $x_2$ are disjoint and slightly decreased compared to those of $x_1$. This way, when $\min{X}$ is given by $x_1$, we know that $x_1<x_2$ in the original domain. For every domain value for $x_1$, we binary search for the number of answers in which this is the $\min{X}$ value.
    The computation of $C_2$ is symmetrical.
\end{proofsketch}
\begin{appendixproof}
    Consider a database $D$, and recall that we assume that the domain values are integers.
    Given a domain value $v$, we set the following notation.
    \begin{align*}
        C_{\min}(v)&=\mathtt{cnt}(Q(D) | \min\{x_1,x_2\}=v)\\
        C_{eq}(v)&=\mathtt{cnt}(Q(D) | x_1=x_2=v)\\
        C_1(v)&=\mathtt{cnt}(Q(D) | x_1=v \wedge x_2>v)\\
        C_2(v)&=\mathtt{cnt}(Q(D) | x_2=v \wedge x_1>v)
    \end{align*}
    Notice that $C_{\min}(v)=C_{eq}(v)+C_{1}(v)+C_{2}(v)$, and the count we want to compute is 
    \[\mathtt{cnt}(Q(D) | x_1=x_2) =
    \sum_{v \in \dom(D)}{C_{eq}(v)}=
    \sum_{v \in \dom(D)}{\left(C_{\min}(v)-C_1(v)-C_2(v)\right)}\]

    First, let us describe how to compute $C_{\min}(v)$ for every domain value $v$.
    Given a database $D$, we replace every value $u$ in the domain of $x_1$ or $x_2$ by $(1,u)$, and we replace every value $u$ in the domain of the other variables by $(2,u)$. This way, other variables are always assigned a larger value than that of $x_1$ and $x_2$, and we have that $\min{X}=\min\{x_1,x_2\}$. Thus, we are given a direct access algorithm where the answers are sorted by $\min\{x_1,x_2\}$.
    We can binary search using the direct access calls to find all of the answer indices where the value of $\min\{x_1,x_2\}$ changes, and thus compute $C_{\min}(v)$ for every domain value in total time $O(t_p+|\dom(D)|\cdot t_a\log|D|)$.

    Now let us describe how to compute $C_1(v)$ for every domain value $v$. 
    We construct a copy of the database where the values of $x_2$ are slightly decreased compared to those of $x_1$, and the values of the other variables are always higher. More precisely, we replace every value $u$ of $x_2$ with $(1,2u-1)$, we replace every value $u$ of $x_1$ with $(1,2u)$ and every value $u$ of the other variables with $(2,u)$.
    Consider a value $v$ for which we want to compute $C_1(v)$.
    We use the same binary search procedure as before to compute the number of answers in which $\min{X}=(1,2v)$.
    As the assignment to the second component of $x_2$ is always odd, the minimum in these answers is always given by $x_1$. These answers include exactly those where $2x_1=2v$ and $2x_2-1>2v$, which are the answers in which $x_2>v=x_1$, so this indeed computes $C_1(v)$.
    Performing this computation for every domain value $v$ takes $O(t_p+|\dom(D)|\cdot t_a\log|D|)$ time. The symmetric procedure can be done to compute $C_2(v)$ for every domain value $v$.
\end{appendixproof}

We can now prove the hardness of direct access as given by the negative side of \Cref{thm:min-da}.

\begin{lemmarep}\label{lem:DA-hard-side}
    Let $Q$ be a self-join-free CQ and $X$ a subset of its variables. 
    Direct access according to $\min{X}$ is not possible with quasilinear preprocessing time and polylogarithmic access time if $Q$ is not acyclic free-connex, or if it has a chordless $k$-path between two variables in $X$ with $k\geq 3$, assuming \hyperclique{} and \seth{}.
\end{lemmarep}
\begin{proofsketch}
    If $Q$ is not acyclic free-connex, it does not admit efficient counting~\cite{mengel2025lower}, and no efficient direct access by \Cref{obs:DA-for-counting}.
    Otherwise, \Cref{lem:counting-path-hard} implies that we cannot count the answers in which the end-points of the chordless path are equal, and \Cref{lem:DA-to-equality-counting} implies that we cannot have efficient direct access by $\min{X}$.
\end{proofsketch}
\begin{appendixproof}
    The first case is that $Q$ is not acyclic free-connex. In this case, $Q$ admits no quasilinear-time counting~\cite{mengel2025lower}, and \Cref{obs:DA-for-counting} implies that it has no direct access with quasilinear preprocessing time and polylog access time (regardless of the answer order).
    The second case is that $Q$ has a chordless $k$-path between two variables $x_1,x_2\in X$ with $k\geq 3$. 
    According to \Cref{lem:DA-to-equality-counting}, direct access according to $\min{X}$ allows to count the answers to $Q$ that satisfy $x_1=x_2$. Since such counting cannot be done in quasilinear time according to \Cref{lem:counting-path-hard}, we conclude that such direct access cannot be done with quasilinear preprocessing and polylog access time.
\end{appendixproof}

\section{Conclusion}\label{sec:conclusion}

We developed an algorithm that eliminates a min-predicate from a CQ, producing a set of full acyclic CQs with disjoint answer sets.
Building on this elimination, we presented dichotomies for a variety of problems:
elimination, counting, and unranked direct access with a min-predicate,
as well as direct access by min ranking.
The tractable cases for these problems are exactly those where the query is acyclic free-connex and there is no long chordless path between two free variables relevant for the min condition.
This is in contrast to similar enumeration problems, where tractability requires no additional condition on top of the query being acyclic free-connex.
A visual summary of our results is shown in \Cref{fig:overview}, with formal statements presented in \Cref{sec:results}.

As the proofs for several query-answering tasks involving min-predicates or rankings turned out to be interdependent, we are hopeful that, in addition to the dichotomy results in this paper, our work can serve as a toolbox for exploring a broader landscape of queries and tasks.
There are several natural generalizations of our results that are worth considering. 
Since our results are restricted to a single min or max predicate,
we would like to extend them to multiple such predicates in a conjunction.
This is equivalent to a conjunction of inequalities, which have been considered~\cite{tziavelis21inequalities,wang2022comparisons}, but are not entirely understood for all queries and tasks.
It would also be interesting to consider a wider class of predicates involving min or max functions and more complex rankings.
For example, while all known results for direct access assume a simple ranking function based on sum, min, max, or a lexicographic order, it is intriguing to explore their combinations,
such as ordering first by the min of some variables and then, in case of a tie, by their lexicographic order.

\bibliographystyle{ACM-Reference-Format}
\bibliography{references.bib}

\end{document}